\definecolor{darkgreen}{rgb}{0,0.5,0}
\definecolor{darkblue}{rgb}{0,0,0.6}
\definecolor{purple}{rgb}{0.4,.2,0.7}
\newtheorem{definition}{Definition}
\newtheorem{proposition}{Proposition}[section]
\newtheorem{theorem}{Theorem}[section]
\newtheorem{lemma}{Lemma}[section]
\newenvironment{proof}{\smallskip\noindent\emph{Proof.}\hspace{1pt}}%
{\hspace{-5pt}{\nobreak\quad\nobreak\hfill\nobreak$\square$\vspace{8pt}%
		\par}\smallskip\goodbreak}
\newcommand{\hnabla}{\widehat{\nabla}}
\newcommand{\bm}{\begin{align*}}
\newcommand{\enm}{\end{align*}}
\newcommand{\bespeq}{\begin{equation}\begin{split}}
\newcommand{\espeq}{\end{split}\end{equation}}
\renewcommand{\div}{\mbox{div }}
\begin{document}

\thispagestyle{empty}

\begin{center}
    ~
    \vskip10mm

     {\LARGE  {\textsc{A new conformal quasi-local energy in general relativity}}}
    \vskip10mm
    
Puskar Mondal$^{a,b}$ and  Shing-Tung Yau$^{a,b,c}$ \\
    \vskip1em
    {\it
        $^a$ Center of Mathematical Sciences and Applications, Harvard University, Cambridge, Massachusetts 02138, USA\\ \vskip1mm
$^b$ Department of Mathematics, Harvard University, Cambridge, Massachusetts 02138, USA\\ \vskip1mm
$^{c}$ Yau Mathematical Sciences Centre, Tsinghua University\\ \vskip1mm 30 Shuangqing Rd, Beijing 100190, China
         \vskip1mm
    }
    \vskip5mm
    \tt{yau@math.harvard.edu, puskar$\_$mondal@fas.harvard.edu}
\end{center}
\vspace{10mm}

\begin{abstract}
\noindent We construct new conserved quasi-local energies in general relativity using the formalism developed by \cite{CWY}. In particular, we use the optimal isometric embedding defined in \cite{yau,yau1} to transplant the conformal Killing fields of the Minkowski space back to the $ 2-$ surface of interest in the physical spacetime. For an asymptotically flat spacetime of order $1$, we show that these energies are always finite. Their limit as the total energies of an isolated system is evaluated and a conservation law under Einsteinian evolution is deduced.

\end{abstract}
\pagebreak

\setcounter{tocdepth}{2}
{\hypersetup{linkcolor=black}
\small
\tableofcontents
}

\section{Introduction}
A local notion of energy is absent in the context of pure gravity due to the existence of a geodesic normal coordinate chart at each point on a semi-Riemannian manifold or in physical terms the equivalence principle. Since the energy content in a physical spacetime would roughly correspond to a measure of being different from the topologically trivial flat spacetime, one may attempt to define a local energy density of pure gravity by constructing invariants of the Weyl curvature. One such entity would be the Bel-Robinson stress-energy tensor. Unfortunately, one can not create dimensionally consistent energy from the Bel-Robinson tensor at the classical level without introducing an additional length scale to the system. Therefore, it becomes essential to define a notion of energy in pure gravity over an extended region of spacetime. There exists the global notion of mass such as the ADM mass for asymptotically flat spacetimes and its desired physical properties such as positivity are proven in the celebrated work of Schoen and Yau \cite{schoen1979proof,schoen1981proof} and later by Witten \cite{witten1981new,parker1982witten}. In addition, there also exists the notion of mass for asymptotically hyperbolic manifolds i.e., for which the Cauchy slices are asymptotically hyperbolic \cite{chrusciel2003mass, chrusciel2019hyperbolic}. Apart from these global notions of energy, it becomes essential to define quasi-local energy in certain physical situations (e.g., to understand the gravitational radiation of binary black hole merger). 

\noindent Penrose \cite{penrose1982some} listed a set of major open problems in mathematical general relativity in 1982 that contained the following problem: \textit{find a quasi-local definition of energy-momentum in the context of pure gravity}. To be more precise, one asks to evaluate the energy content in a spacelike region bounded by a topological $2-$sphere. Some desirable physical properties of this definition of energy should include non-negativity under the assumption of an appropriate energy condition on the spacetime and rigidity i.e., it should vanish for a topological $2-$sphere embedded in the Minkowski space. In addition, it should encode the information about pure gravitational energy (Weyl curvature effect) as well as the stress-energy tensor of any source fields present in the spacetime. Based on a Hamilton-Jacobi analysis, Brown-York \cite{brown1992quasilocal, brown1993quasilocal} and Liu-Yau \cite{liu2003positivity,liu2006positivity} defined a quasi-local mass by isometrically embedding the $2-$surface into the reference Euclidean $3-$ space and comparing the extrinsic geometry (the formulation relied on the embedding theorem of Pogorelov \cite{pogorelov1952regularity} i.e., the topological $2-$sphere needed to posses everywhere non-negative sectional curvature). However, \cite{murchadha2004comment} discovered surfaces in Minkowski space that do have strictly positive Brown-York and Liu-Yau mass. It appeared that these formulations lacked a prescription of momentum information. This led Wang and Yau \cite{yau,yau1} to define the most consistent notion of the quasi-local mass associated with a space-like topological $2-$sphere (we should mention that the second author together with Alae and Khuri recently provided a new definition of quasi-local mass bounded by a class of surfaces verifying certain topological conditions in \cite{alaee2023quasi}, we should also mention that very recently John Lott \cite{lott} defined a new notion of quasi-local mass using spinor method). It involves isometric embedding of the topological $2-$sphere bounding a space-like domain in the physical spacetime satisfying dominant energy condition (energy can not flow into a past light cone of an arbitrary point in spacetime; essentially finite propagation speed) into the Minkowski spacetime instead of Euclidean $3-$space. This formulation relies on a weaker condition on the sectional curvature of the $ 2-$ surface of interest and solvability of Jang's equation \cite{jang1978positivity} with prescribed Dirichlet boundary data. The Wang-Yau quasi-local mass is then defined as the infimum of the Wang-Yau quasi-local energy among all physical observers that is found by solving an optimal isometric embedding equation. This Wang-Yau quasi-local mass possesses several good properties that are desired on physical ground. This mass is strictly positive for $2-$surfaces bounding a space-like domain in a curved spacetime that satisfies the dominant energy condition and it identically vanishes for any such $2-$surface in Minkowski spacetime. In addition, it coincides with the ADM mass at space-like infinity \cite{wang2010limit} and Bondi-mass at null infinity \cite{chen2011evaluating} and reproduces the time component of the Bel-Robinson tensor (a pure gravitational entity) together with the matter stress-energy at the small sphere limit, that is when the $2-$sphere of interest is evolved by the flow of its null geodesic generators and the vertex of the associated null cone is approached \cite{chen2018evaluating}. In addition, explicit conservation laws were also discovered at the asymptotic infinity \cite{CWY}.  

\noindent In addition to the charges (energy, momentum, angular momentum, and center of mass) defined by means of a Minkowskian Killing vector field transplanted onto the physical spacetime through the isometric embedding of a topological $2-$sphere, one may ask to construct charges by means of conformal Killing vector fields. Recall that in the original proof of the stability of the Minkowski space by Christodoulou and Klainerman \cite{christodoulou1993global}, the conformal vector fields played a crucial role in constructing coercive entities that controlled appropriate norms of the dynamical variables. In particular, the scaling vector field $S=X^{\mu}\partial_{\mu}$ (in the usual rectangular global chart $\{X^{\mu}=(t=X^{0},X^{1},X^{2},X^{3})\}$) and the inversion generator $K=\eta(X,X)\partial_{t}+2X^{0}X^{\alpha}\partial_{\alpha}$ were used (note that these are exactly conformal Killing vector fields for Minkowski space but in the stability problem where small perturbations of Minkowski space was considered, they served as approximate conformal Killing vector fields i.e., the trace-free part of their deformation tensors are small in an appropriate function space setting). The reason behind using these energies was that upon contraction with the Bel-Robinson tensor, these vector fields generate coercive entities that control appropriate \textit{weighted} norms of the curvature. Therefore, in an asymptotically flat spacetime, where usual Sobolev norms are replaced by weighted ones, these entities were the natural ones to define. One can ask a similar question: Can one construct a quasi-local entity in the sense of Chen-Wang-Yau \cite{WY} for the conformal Killing vector fields of the Minkowski space?        

\noindent In this article we consider the scaling vector field $S=x^{\mu}\partial_{\mu}$ and construct a quasi-local entity. We study its properties and prove a conservation law for asymptotically flat spacetime of order one. In particular, we use the definition provided by Chen-Wang-Yau \cite{CWY} that was further motivated by the earlier study of Wang-Yau \cite{yau,yau1}. Let us introduce the notion of generalized energies defined by Chen-Wang-Yau \cite{CWY}. Consider a topological $2-$sphere $\Sigma$ in the physical spacetime $(M,\widehat{g})$ that bounds a spacelike topological ball. Let us assume that the mean curvature vector $\mathbf{H}$ of $\Sigma$ is space-like. Let $\mathbf{J}$ be the reflection of $\mathbf{H}$ through the future outgoing light cone in the normal bundle of $\Sigma$. The data that Wang and Yau use to define the quasi-local energy is the triple ($\sigma,|\mathbf{H}|_{\hat{g}},\alpha_{\mathbf{H}}$) on $\Sigma$, where $\sigma$ is the induced metric on $\Sigma$ by the Lorentzian metric $\hat{g}$ on the physical spacetime $(M,\widehat{g})$, $|\mathbf{H}|_{\hat{g}}$ is the Lorentzian norm of $\mathbf{H}$, and $\alpha_{\mathbf{H}}$ is the connexion $1-$form of the normal bundle with respect to the mean curvature vector $\mathbf{H}$ and is defined as follows 
\begin{eqnarray}
\label{eq:connection}
\alpha_{\mathbf{H}}(X):=\hat{g}(\nabla[\hat{g}]_{X}\frac{\mathbf{J}}{|\mathbf{H}|},\frac{\mathbf{H}}{|\mathbf{H}|}).
\end{eqnarray}
Choose a basis pair ($e_{3},e_{4}$) for the normal bundle of $\Sigma$ in the spacetime that satisfy $\hat{g}(e_{3},e_{3})=1,\hat{g}(e_{4},e_{4})=-1$, and $\hat{g}(e_{3},e_{4})=0$. Now embed the $2-$surface $\Sigma$ isometrically into the Minkowski space with its usual metric $\eta$ i.e., the embedding map $X: x^{a}\mapsto X^{\mu}(x^{a})$ satisfies $\sigma(\frac{\partial}{\partial x^{a}},\frac{\partial}{\partial x^{b}})=\eta(\frac{\partial X}{\partial x^{a}},\frac{\partial X}{\partial x^{b}})$, where $\{x^{a}\}_{a=1}^{2}$ are the coordinates on $\Sigma$. Now identify a basis pair ($e_{30},e_{40}$) in the normal bundle of $X(\Sigma)$ in the Minkowski space that satisfies the exact similar property as $(e_{3},e_{4})$. In addition the time-like unit vector $e_{4}$ is chosen to be future directed i.e., $\hat{g}(e_{4},\partial_{t})<0$. Similarly identify the corresponding entities $\mathbf{H}_{0}, \alpha_{\mathbf{H}_{0}}$ of the image $X(\Sigma)$ in the Minkowski space. Let $\tau:=-\langle X,\partial_{t}\rangle_{\eta}$, a function on $\Sigma$ be the time function of the embedding. Technically, the isometric embedding condition only provides the solution of the three embedding functions out of the required four. The fourth embedding function is obtained through solving a fourth order elliptic PDE. Such embedding is entitled as the \textit{optimal} isometric embedding. Chen-Wang-Yau \cite{CWY} provided the following definition
\begin{definition}
\label{def1}
The quasi-local entity of a membrane $\Sigma$ with respect to an optimal isometric embedding $(X,T_{0})$ and a conformal Killing field $K$ of the Minkowski space is 
\begin{eqnarray}
E(\Sigma,X,T_{0},K):=-\frac{1}{8\pi}\int_{\Sigma}\left(\langle K,T_{0}\rangle \rho+K^{\perp}\cdot J\right)\mu_{\Sigma},
\end{eqnarray}
where the current vector field $J:=J^{a}\frac{\partial}{\partial x^{a}},~a=1,2$ is defined as follows 
\begin{eqnarray}
J:=\rho\nabla\tau-\nabla[\sinh^{-1}\frac{\rho\Delta \tau}{|H_{0}||H|}]-\alpha_{H_{0}}+\alpha_{H}
\end{eqnarray}
and $\rho$ reads 
\begin{eqnarray}
 \rho=\frac{\sqrt{|\mathbf{H}_{0}|^{2}+\frac{(\Delta\tau)^{2}}{1+|\nabla\tau|^{2}}}-\sqrt{|\mathbf{H}|^{2}+\frac{(\Delta\tau)^{2}}{1+|\nabla\tau|^{2}}}}{\sqrt{1+|\nabla\tau|^{2}}}   
\end{eqnarray}
\end{definition}
Essentially, this definition uses the isometric embedding to transplant the Killing vector field from the Minkowski space onto the physical spacetime $(M,\widehat{g})$ which does not necessarily have a Killing vector field. 
Here we use this definition to construct a quasi-local entity associated with the conformal Killing vector field the scaling vector field $S=X^{\mu}\partial_{\mu}$ of the Minkowski space. The same idea of transplanting the conformal Killing field $S$ onto the physical spacetime via isometric embedding is utilized. Regarding this new quasi-local entity we prove two theorems. The first theorem concerns its finiteness property in an vacuum asymptotically flat spacetime. The second theorem establishes a conservation law at the asymptotic infinity. In order to state the theorem, we need the definition of a certain asymptotically flat initial data set of the Einstein's equations 
\begin{definition}
\label{def2}
$(M,g,k)$ is asymptotically flat initial data set of order one if there is a compact subset $C$ of $M$ such that $M-C$ is diffeomorphic to $\mathbb{R}^{3}-B$ and the diffeomorphism incduces a ``Cartesian" coordinate system $\{x^{i}\}_{i=1}^{3}$ on $M-C$ in which the following decay condition holds for the metric $g$ and the second fundamental for $k$
\begin{eqnarray}
 g_{ij}=\delta_{ij}+\frac{g_{ij}^{-1}}{r}+\frac{g_{ij}^{-2}}{r^{2}}+o(r^{-2}),   
\end{eqnarray}
where $r:=\sqrt{\sum_{i=1}^{3}(x^{i})^{2}}$
\end{definition}
In this article, we will be concerned with asymptotically flat data set of order one as defined above. 
\begin{theorem}[Finiteness property]
 Let $(M,g,k)$ be an asymptotically flat initial data set of order one that verifies the vacuum constraint equation,
 then the total quasi-local energy $E(\Sigma,X,T_{0},S)$ associated with the conformal Killing vector field $S=X^{\mu}\partial_{\mu}$ is finite.  
\end{theorem}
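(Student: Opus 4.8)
The plan is to identify the total quasi-local energy with the limit $\lim_{r\to\infty}E(\Sigma_r,X,T_0,S)$ over the exhaustion of the asymptotically flat end $M\setminus C$ by the coordinate spheres $\Sigma_r=\{\,r=\mathrm{const}\,\}$, and to show this limit exists. Since the mean curvature vector of $\Sigma_r$ satisfies $|\mathbf{H}|=\tfrac{2}{r}+O(r^{-2})$ it is spacelike for all large $r$, so Definition \ref{def1} applies along the entire family. The first step is to expand every ingredient of Definition \ref{def1} in powers of $1/r$ using the order-one decay of Definition \ref{def2}: I would write the induced metric as $\sigma=r^{2}\big(\sigma_{0}+r^{-1}\sigma^{(1)}+r^{-2}\sigma^{(2)}+o(r^{-2})\big)$ with $\sigma_{0}$ the round unit-sphere metric, and likewise expand $|\mathbf{H}|$, the two second fundamental forms and the connection one-form $\alpha_{\mathbf{H}}$, the coefficients being expressions in $g^{-1}_{ij}$, $g^{-2}_{ij}$ and $k$. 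Because the scaling field carries a weight that grows like $r$---one order deeper than a translation Killing field---it is precisely the $g^{-2}_{ij}$ coefficient, i.e. asymptotic flatness \emph{of order one}, that is needed for the surviving contribution to be well defined.

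The decisive simplification comes from the fact that the scaling field is the Minkowski position vector field. Hence, at each point of the embedded surface, $\langle S,T_0\rangle=\langle X,T_0\rangle$ equals, up to sign, exactly the time function $\tau$ of the optimal embedding, so that the a priori $O(r)$ quantity $\langle S,T_0\rangle$ collapses to $-\tau=O(1)$ once one shows $\tau$ is bounded. Establishing $\tau=\tau^{(0)}+O(r^{-1})$ with $\tau^{(0)}$ bounded, together with the reference data $|\mathbf{H}_0|$, $\alpha_{\mathbf{H}_0}$ to the needed order, is done by solving the fourth-order optimal isometric embedding equation underlying Definition \ref{def1} asymptotically; the three spatial embedding functions are fixed by matching $\sigma$ to the flat metric $\eta$, and the fourth is produced by this elliptic equation. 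The tangential part $S^{\perp}$ that pairs with the one-form $J$ is controlled the same way: from $\langle X,X\rangle_\eta$ one reads off $S^{\perp}_a=\tfrac12\,\partial_a\langle X,X\rangle_\eta=O(1)$, so the large radial normal $r\,e_3$ contributes nothing to $S^{\perp}\!\cdot J$.

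With these scalings I would then compute $\rho$ and $J$ to the order that survives against $\mu_{\Sigma_r}=r^{2}\,d\mu_{0}+\cdots$. Expanding the two square roots gives $\rho=r^{-2}\rho^{(2)}+o(r^{-2})$, with $\rho^{(2)}$ the mass aspect built from the first- and second-order metric coefficients, while $J=r^{-2}J^{(2)}+o(r^{-2})$ is governed, through the definition \eqref{eq:connection} of the connection one-form, by the momentum density carried by $k$. At this point the vacuum constraints $R_{g}+(\mathrm{tr}\,k)^{2}-|k|^{2}=0$ and $\mathrm{div}\big(k-(\mathrm{tr}\,k)g\big)=0$ are invoked: the Hamiltonian constraint fixes the spherically symmetric part of $\rho^{(2)}$, while the momentum constraint exhibits the borderline angular part of $J^{(2)}$ as an exact divergence on the closed surface $\Sigma_r$. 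Assembling, the surviving part of the integrand $\langle S,T_0\rangle\rho+S^{\perp}\!\cdot J$ is $O(r^{-2})$, so its product with $\mu_{\Sigma_r}$ is $O(1)$ in the round measure $d\mu_0$, while the single borderline angular term integrates to zero by Stokes' theorem on $\Sigma_r$; this produces a finite limit.

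I expect the main obstacle to be twofold. The bulk of the labour is the asymptotic analysis of the fourth-order optimal isometric embedding equation, which must be carried out far enough to pin $\tau$ and $\alpha_{\mathbf{H}_0}$ to the order at which the $1/r$ expansion stabilizes, together with the solvability and the uniqueness needed there. Conceptually the crux is controlling the terms that, because the dilation weight probes the full order-one data, cannot be discarded by the position-vector identity alone; their boundedness genuinely relies on recasting the dangerous angular contributions---via the vacuum momentum constraint---as integrals of surface divergences over the closed sphere. Tracking the interaction of the $O(1)$ pieces $\tau$ and $S^{\perp}$ with the $O(r^{-2})$ densities $\rho^{(2)}$ and $J^{(2)}$, with consistent index and weight bookkeeping, is the delicate step on which finiteness ultimately rests.
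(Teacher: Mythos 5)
Your overall strategy (exhaust by coordinate spheres, expand all data in powers of $1/r$, solve the isometric and optimal isometric embedding equations asymptotically, and isolate the borderline terms against $\mu_{\Sigma_r}=r^2\mu_{\mathbb{S}^2}+\cdots$) is the same as the paper's. But the step on which you rest the entire cancellation is wrong: you claim that $\langle S,T_0\rangle=-\tau$ collapses to an $O(1)$ quantity ``once one shows $\tau$ is bounded.'' The time function of the optimal embedding is \emph{not} bounded here: with $X^i=r\widehat{X}^i+O(1)$ and $T_0=(a^0,a^1,a^2,a^3)+O(r^{-1})$ one has $\tau=-\langle X,T_0\rangle=-r\,a_i\widehat{X}^i+O(1)$, and the leading observer determined by the optimal embedding equation satisfies $a^i=a^0p^i/e$, so $a^i\neq 0$ whenever the ADM momentum is nonzero. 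Consequently $\langle S,T_0\rangle\,\rho\,\mu_{\Sigma_r}$ is genuinely $O(r)$, with divergent part $r\,\frac{a^i}{a^0}\int_{\mathbb{S}^2}\widehat{X}^i(h^{-2}_0-h^{-2})\mu_{\mathbb{S}^2}$, and finiteness is \emph{equivalent} to the vanishing of this integral (Lemma \ref{finite}). Your proposal never confronts this term; the bounded-$\tau$ shortcut silently restricts to the zero-momentum case.

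The actual cancellation is also not produced by the mechanism you propose (Hamiltonian constraint fixing $\rho^{(2)}$, momentum constraint exhibiting the angular part of $J$ as an exact divergence). It comes from the identities $\int_{\mathbb{S}^2}\widehat{X}^i\widehat{h}^{-2}\mu_{\mathbb{S}^2}=0$ and $\int_{\mathbb{S}^2}\widehat{X}^i\widehat{h}^{-2}_0\mu_{\mathbb{S}^2}=0$ for the mean curvatures of the coordinate spheres in $M$ and of their isometric images in $\mathbb{R}^3$ (Lemmas 7.2--7.3 of \cite{CWY}, proved via the second variation of area), combined with $h^{-2}_0=\widehat{h}^{-2}_0$ and $h^{-2}=\widehat{h}^{-2}$, the latter using precisely $k=O(r^{-2})$ from the order-one decay. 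The other borderline term $X^i\partial_aX^iJ^a\mu_{\Sigma_r}$ you do dispose of for essentially the right reason, though your claimed order is off: $\tfrac12\partial_a\langle X,X\rangle_\eta=r\,\partial_a\bigl(\widehat{X}^i(X^i)^0\bigr)+O(1)$ is $O(r)$, not $O(1)$; what saves it is that the would-be $O(r^2)$ piece $r^2\widehat{X}^i\widehat{\nabla}_a\widehat{X}^i$ vanishes pointwise because $\widehat{X}^i\widehat{X}^i=1$, after which pairing with $J^a\mu_{\Sigma_r}=O(r^{-1})$ gives $O(1)$. As written, then, the argument has a genuine gap at its central point.
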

The main idea behind the proof of the theorem is to solve the optimal isometric embedding equation asymptotically through series expansion. Naively at the level of scales, one observes that there are $O(r)$ terms in the definition of the energy $E$. However, for asymptotically flat spacetime of order one in definition \ref{def2}, such terms will vanish leading to finiteness of the total energy. This requires very careful analysis of the optimal isometric embedding equation order by order in $r$ and several integration by parts identities. Once the total energy associated with the scaling vector field $S$ is proven to be finite, then one intends to understand its behavior under Einsteinian evolution. It turns out that one can relate the time evolution of the total energy to that of the linear momentum and the ADM energy. In particular recall the Wang-Yau quasi-local energy $e$ and Chen-Wang-Yau quasi-local linear momentum $p$
\begin{eqnarray}
e=\frac{1}{8\pi}\int_{\Sigma}(\rho T^{0}_{0}+J^{a}\nabla_{a}X^{0})\mu_{\Sigma}\\
p^{i}=\frac{1}{8\pi}\int_{\Sigma}(\rho T^{i}_{0}+J^{a}\nabla_{a}X^{i})\mu_{\Sigma}.
\end{eqnarray}
The following theorem relates the time evolution of the energy $E$ associated with the scaling vector field $S=X^{\mu}\partial_{\mu}$ and the energy momentum $(e,p)$ at spacelike infinity.
\begin{theorem}[Conservation property]
\label{theorem2}
Let $(M,g,k)$ be an asymptotically flat initial data set of order one (as in definition \ref{def2}) satisfying the vacuum constraint equation. Let $(M,g(t),k(t))$ be the solution of the Einstein evolution equations with initial condition $g(0)=g$ and $k(0)=k$. Assume that the lapse function $N$ and the shift vector field $Y$ verifies $N=1+O(r^{-1}), Y=\frac{Y^{-1}}{r}+O(r^{-2})$. Then the total energy $E$ associated with the scaling vector field $S=X^{\mu}\partial_{\mu}$ verifies the following conservation law at spacelike infinity 
\begin{eqnarray}
 \partial_{t}E(t)=\frac{p^{2}}{e}   
\end{eqnarray}    
\end{theorem}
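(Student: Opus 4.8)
\textbf{Proof proposal for Theorem \ref{theorem2}.}

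The plan is to express $\partial_t E(t)$ in terms of the time derivatives of the geometric data entering the definition of $E$, and then to identify the surviving terms at spacelike infinity with the quasi-local energy-momentum $(e,p)$. Starting from Definition \ref{def1}, we have $E = -\frac{1}{8\pi}\int_\Sigma \left(\langle S,T_0\rangle\rho + S^\perp\cdot J\right)\mu_\Sigma$, where $S=X^\mu\partial_\mu$ is the scaling field. The first step is to recognize that the scaling vector field decomposes, along the embedding, as a \emph{linear combination of the translation Killing fields with coefficients that are the embedding functions themselves}: namely $S = X^0 \partial_t + X^i\partial_i$, so that $\langle S,T_0\rangle$ and $S^\perp$ inherit the structure $X^0(\cdot) + X^i(\cdot)$ already appearing in the definitions of $e$ and $p^i$. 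This algebraic observation is what will allow us to relate $E$ to $(e,p)$ at the end.

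Next I would compute $\partial_t$ of the integrand. By the Einstein evolution equations with lapse $N=1+O(r^{-1})$ and shift $Y = Y^{-1}/r + O(r^{-2})$, the metric and second fundamental form evolve by $\partial_t g_{ij} = -2Nk_{ij} + \Lc_Y g_{ij}$ and the analogous equation for $k$; these induce the time evolution of the induced metric $\sigma$, the mean curvature norm $|\mathbf{H}|$, and the connection one-form $\alpha_{\mathbf H}$ on $\Sigma$. The optimal isometric embedding $(X,T_0)$ also evolves, but because it is determined by the \emph{instantaneous} data $(\sigma,|\mathbf H|,\alpha_{\mathbf H})$ through the fourth-order optimal embedding equation, the first-order variation of $E$ with respect to the embedding data vanishes at a critical point (this is precisely the Euler--Lagrange / stationarity property built into the optimal embedding, and is the same mechanism by which the Wang--Yau mass is well-defined). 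Consequently only the explicit time dependence through the \emph{physical} geometry survives, and I would invoke the finiteness already established in Theorem 1 (the proof of which controls every quantity order by order in $r$) to guarantee that each term in $\partial_t E$ has a well-defined limit as the surface $\Sigma$ is pushed to spacelike infinity.

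The heart of the computation is then to evaluate the surviving boundary term at spacelike infinity using the order-one asymptotics of Definition \ref{def2}. Here I would expand $\rho$, $J$, $\tau$, and the embedding functions $X^\mu$ to the order dictated by the decay $g_{ij} = \delta_{ij} + g_{ij}^{-1}/r + g_{ij}^{-2}/r^2 + o(r^{-2})$, carrying through the same series solution of the optimal embedding equation used in Theorem 1. The scaling factor $X^\mu = O(r)$ multiplying the $O(r^{-1})$ leading terms of the integrand produces the $O(1)$ contributions that assemble, after the angular integration over the round sphere, into the ADM-type quantities. The expectation is that the time derivative of the $\langle S,T_0\rangle\rho$ piece reproduces (a multiple of) $p^2/e$ while the $S^\perp\cdot J$ piece, after integration by parts on $\Sigma$, contributes the complementary terms, the cross-terms cancelling against one another so that $\partial_t E = p^2/e$ precisely.

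The main obstacle I anticipate is the second step: rigorously justifying that the variation of $E$ through the time-dependence of the optimal embedding $(X(t),T_0(t))$ does not contribute. Unlike the Wang--Yau \emph{mass}, which is a genuine infimum over embeddings so that stationarity is automatic, the conformal energy $E$ is evaluated at the optimal embedding but is not itself the object being extremized, so one must verify carefully that the optimal-embedding equation still forces the relevant first variation to vanish (or else track the non-vanishing contribution explicitly). This requires differentiating the fourth-order optimal isometric embedding equation in $t$, controlling $\partial_t X$ and $\partial_t T_0$ in the weighted norms from Theorem 1, and showing the associated boundary terms decay or cancel. Establishing this, together with the delicate order-by-order bookkeeping in $r$, is where the real work lies; once it is in hand, the identification $\partial_t E = p^2/e$ follows from the algebraic decomposition of $S$ and the definitions of $e$ and $p$.
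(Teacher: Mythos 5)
Your overall strategy (decompose $S=X^{0}\partial_{t}+X^{i}\partial_{i}$, differentiate the integrand term by term using the Einstein evolution equations, and evaluate the surviving $O(1)$ contributions at spacelike infinity via the order-one asymptotics) matches the paper's, but you are missing the two ingredients that actually produce the quotient $p^{2}/e$, and the mechanism you propose for handling the time-dependence of the embedding is not the one that works. First, the direct computation gives $\partial_{t}E=\frac{1}{8\pi a^{0}}\int_{\mathbb{S}^{2}}\partial_{t}(h^{-3}_{0}-h^{-3})\,a_{i}\widehat{X}^{i}\,\mu_{\mathbb{S}^{2}}+O(r^{-1})$, and to convert this into $p^{2}/e$ one needs (a) the identities of Chen--Wang--Yau that $\int_{\mathbb{S}^{2}}\widehat{X}^{i}\partial_{t}h^{-3}_{0}=0$ and $\int_{\mathbb{S}^{2}}\widehat{X}^{i}\partial_{t}h^{-3}=-8\pi p^{i}$ (Lemma \ref{final}), which yield $\partial_{t}E=p^{i}a_{i}/a^{0}$, and (b) the leading-order integrability (solvability) condition of the optimal isometric embedding equation, obtained by pairing the $O(r^{-3})$ equation against $\widehat{X}^{i}\in\ker(\widehat{\Delta}+2)$ (Lemma \ref{integrability}), which forces $a^{i}e=a^{0}p^{i}$ and hence $p^{i}a_{i}/a^{0}=p^{2}/e$. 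Without the relation $a^{i}=a^{0}p^{i}/e$ there is no way to see the denominator $e$; your proposal only expresses the hope that the terms ``assemble'' into $p^{2}/e$, which is precisely the step that requires this lemma.

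Second, the stationarity argument you lean on in your middle step --- that the first variation of $E$ with respect to the embedding data vanishes because $(X,T_{0})$ solves the optimal embedding equation --- is, as you yourself suspect, not available here: $E$ is evaluated at the optimal embedding but is not the functional being extremized. The paper does not use stationarity at all. Instead, the contributions coming from $\partial_{t}T_{0}^{-1}(t)$ (the time-dependence of the observer/embedding) enter multiplied by $\int_{\mathbb{S}^{2}}\widehat{X}^{i}(h^{-2}_{0}-h^{-2})\,\mu_{\mathbb{S}^{2}}$, and this integral vanishes by the finiteness condition already established in Theorem 1 (Lemma \ref{finite} combined with Lemma \ref{impt}). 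So the embedding-variation terms are killed by the finiteness of $E$ for order-one asymptotically flat data, not by any Euler--Lagrange property. You would need to replace your second step with this explicit cancellation, and add the two lemmas above, for the argument to close.
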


\noindent This formula can be re-written in terms of the evolution of the centre of mass $C^{i}$ following the result of Chen-Wang-Yau \cite{CWY}. Recall that the following conservation law holds by the result of \cite{CWY}
\begin{eqnarray}
 \partial_{t}C^{i}(t)=\frac{p^{i}}{e}.   
\end{eqnarray}
Therefore, our result implies $\partial_{t}E(t)=p_{i}\partial_{t}C^{i}(t)$. A more detailed physical interpretation of this result is yet to be thought of.

\section{Evaluation of the energy $E$: solution of the Isometric embedding equations}
In this section, we evaluate the quasi-local entity associated with the conformal Killing vector field $S=X^{\mu}\partial_{\mu}$ as defined in \ref{def1}. The quasi-local entity reads 
\begin{eqnarray}
E(\Sigma,X,T_{0},K=S=X^{\alpha}\partial_{\alpha})=-\frac{1}{8\pi}\int_{\Sigma}\left(\eta_{\alpha\beta}X^{\alpha}T^{\beta}_{0}\rho+X^{\alpha}\nabla_{a}X^{\beta}\eta_{\alpha\beta}J^{a}\right)\mu_{\Sigma}.
\end{eqnarray}
We want to evaluate this on large spheres foliating $M-C$. There is a standard spherical coordinate system $(r,\theta,\varphi)$ on $M-C$ defined as 
\begin{eqnarray}
 x^{1}=r\sin\theta\cos\varphi,~x^{2}=r\sin\theta\sin\varphi,~x^{3}=r\cos\theta.   
\end{eqnarray}
On each level set $\Sigma_{r}$ of $r$ we can use $(\theta,\varphi)$ as coordinates to express geometric data we need in order to evaluate the quasi-local conserved entity. Note the following expansions in $r$
\begin{eqnarray}
\sigma_{ab}=r^{2}\widehat{\sigma}_{ab}+r\sigma^{1}_{ab}+\sigma^{0}_{ab}+o(1),\\
|H|=\frac{2}{r}+\frac{h^{-2}}{r^{2}}+\frac{h^{-3}}{r^{3}}+o(r^{-3}),\\
\alpha_{H}=\frac{\alpha^{-1}_{H}}{r}+\frac{\alpha^{-2}_{H}}{r^{2}}+o(r^{-2}).
\end{eqnarray}
Here $\widehat{\sigma}$ is the standard round metric on the unit $2-$sphere $\mathbb{S}^{2}$, $\sigma^{1}_{ab}$ and $\sigma^{0}_{ab}$ are symmetric $2-$tensors on $\mathbb{S}^{2}$, $h^{-2},h^{-3}$ are functions on $\mathbb{S}^{2}$, and $\alpha^{-1}_{H}$ and $\alpha^{-2}_{H}$ are $1-$forms on $\mathbb{S}^{2}$. Note that these data are sufficient to evaluate the quasi-local entity. First we need to isometrically embed $\Sigma_{r}$ into the Minkowski space. Let $i:\Sigma_{r}\to (\mathbb{R}^{1,3},\eta),~(\theta,\varphi)\mapsto (X^{0}(\theta,\varphi),X^{1}(\theta,\varphi),X^{2}(\theta,\varphi),X^{3}(\theta,\varphi))$ be the embedding map. Recall the isometric embedding equation 
\begin{eqnarray}
\label{eq:isometric}
\langle dX,dX\rangle_{\eta}=\sigma.
\end{eqnarray}
Once again we can attempt to solve this equation order by order in $r$ through the following asymptotic expansion 
\begin{eqnarray}
X^{0}(r)=(X^{0})^{0}+\frac{(X^{0})^{-1}}{r}+o(r^{-1})\\
X^{i}(r)=r\widehat{X}^{i}+(X^{i})^{0}+\frac{(X^{i})^{-1}}{r}+o(r^{-1}),
\end{eqnarray}
where $\widehat{X}^{i}$ are the usual $SO(3)$ elements, $(X^{0})^{0}, (X^{i})^{0},(X^{0})^{-1},(X^{i})^{-1}$ are independent of $r$. Now note that the isometric embedding equation \ref{eq:isometric} only solves for three functions. Assuming we can solve for $X^{1},X^{2},X^{3}$, we still need to solve for the time function. Let $T_{0}(r)$ denote the time vector field and $\tau(r):=-\langle X(r),T_{0}(r)\rangle$. Note that for large $r$, $T_{0}$ starts coinciding with the constant vector field and therefore one can write     
\begin{eqnarray}
T_{0}=(a^{0},a^{1},a^{2},a^{3})+\frac{T^{-1}_{0}}{r}+ O(r^{-1}).
\end{eqnarray}
Once we solve for the optimal isometric embedding equation 
\begin{eqnarray}
 \text{div}_{\sigma}\left(\rho\nabla\tau-\nabla[\sinh^{-1}(\frac{\rho\Delta\tau}{|H||H_{0}|})]-\alpha_{H_{0}}+\alpha_{H}\right)=0,   
\end{eqnarray}
then $\tau:=-\langle X,T_{0}\rangle$ is determined on each $\Sigma_{r}$. Now we solve the isometric embedding equation \ref{eq:isometric} order by order in $r$
which yields
\begin{eqnarray}
\partial_{a}\widehat{X}^{i}\partial_{b}\widehat{X}^{i}=\widehat{\sigma}_{ab},\\
2\partial_{a}\widehat{X}^{i}\partial_{b}(X^{i})^{0}=\sigma^{1}_{ab}.
\end{eqnarray}
The first equation is trivial while the second equation can be solved by means of the following ansatz 
\begin{eqnarray}
(X^{i})^{0}=p^{a}\widehat{\nabla}_{a}\widehat{X}^{i}+v\widehat{X}^{i}
\end{eqnarray}
for $p\in T\mathbb{S}^{2}, v\in C^{\infty}(\mathbb{S}^{2})$. 
Now recall $H_{0}=-\Delta X$ and the inverse metric $\sigma^{ab}=\frac{1}{r^{2}}\widehat{\sigma}^{ab}+O(r^{-3})$. One can find an explicit expansion for $|H_{0}|$
\begin{eqnarray}
|H_{0}|=\frac{2}{r}+\frac{h^{-2}_{0}}{r^{2}}+O(r^{-3})
\end{eqnarray}
where $h^{-2}_{0}$ reads 
\begin{eqnarray}
h^{-2}_{0}=-\widehat{X}^{i}\widehat{\Delta}(X^{i})^{0}-\widehat{\sigma}^{ab}\sigma^{1}_{ab}.
\end{eqnarray}
Now note $\tau=-\langle X,T_{0}\rangle=-ra_{i}\widehat{X}^{i}+O(1)$. Notice that $\widehat{X}^{i}$ lie in the kernel of $\widehat{\Delta}+2$. Now we attempt to solve the optimal isometric embedding equation for $\tau$. The leading order optimal isometric embedding equation is at the level of $O(r^{-3})$ which yields 
\begin{eqnarray}
\label{eq:optimal}
\widehat{\text{div}}(\rho^{-2}\widehat{\nabla}\tau^{1})-\frac{1}{4}\widehat{\Delta}(\rho^{-2}\widehat{\Delta}\tau^{1})-\frac{1}{2}\widehat{\Delta}(\widehat{\Delta}+2)(X^{0})^{0}+\widehat{\text{\div}}(\alpha^{-1}_{H})=0
\end{eqnarray}
where $\alpha_{H}=\frac{\alpha^{-1}_{H}}{r}+\frac{\alpha^{-2}_{H}}{r^{2}}+O(r^{-3})$ and $\alpha_{H_{0}}=\frac{\alpha^{-1}_{H_{0}}}{r}+\frac{\alpha^{-2}_{H_{0}}}{r^{2}}+O(r^{-3})$. An explicit calculation yields 
\begin{eqnarray}
(\alpha^{-1}_{H_{0}})_{a}=\frac{1}{2}\widehat{\nabla}_{a}((\widehat{\Delta}+2)(X^{0})^{0})
\end{eqnarray}
Notice that the data $(|H_{0}|,\alpha_{H_{0}})$ are obtainable from the data $(\sigma,H,\alpha_{H})$ and the isometric embedding $X$. Now we obtain the following lemma 
\begin{lemma}
\label{integrability}
 The following identity is verified for an optimal isometric embedding 
 \begin{eqnarray}
 a^{i}\int_{\mathbb{S}^{2}}(h^{-2}_{0}-h^{-2})=a^{0}\int_{\mathbb{S}^{2}}\widehat{X}^{i}\widehat{\text{div}}(\alpha^{-1}_{H})    
 \end{eqnarray}
\end{lemma}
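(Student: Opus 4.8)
The plan is to read the asserted identity as the \emph{integrability (solvability) condition} for the leading-order optimal isometric embedding equation \eqref{eq:optimal}. That equation is an elliptic equation for $\tau^{1}$ whose principal operator annihilates $\widehat{X}^{i}$ (the excerpt already records $\widehat{X}^{i}\in\ker(\widehat{\Delta}+2)$), so pairing \eqref{eq:optimal} with each $\widehat{X}^{i}$ and integrating over $(\mathbb{S}^{2},\widehat{\sigma})$ must produce a constraint on the remaining data; I claim this constraint is exactly the stated relation. The whole argument is four integrations by parts on the closed surface $\mathbb{S}^{2}$ followed by two elementary round-sphere identities, so there is no analytic difficulty — the content is purely in the bookkeeping of coefficients.

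First I would dispose of the inert terms. Because $\widehat{\Delta}$ and $\widehat{\Delta}+2$ are self-adjoint and commute, $-\tfrac12\int_{\mathbb{S}^{2}}\widehat{X}^{i}\,\widehat{\Delta}(\widehat{\Delta}+2)(X^{0})^{0}=-\tfrac12\int_{\mathbb{S}^{2}}[(\widehat{\Delta}+2)\widehat{\Delta}\widehat{X}^{i}](X^{0})^{0}=0$, since $(\widehat{\Delta}+2)\widehat{X}^{i}=0$; this is the whole point of testing against the kernel. The term $\int_{\mathbb{S}^{2}}\widehat{X}^{i}\,\widehat{\text{div}}(\alpha^{-1}_{H})$ is already the right-hand side and I leave it alone. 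For the second-order term I use self-adjointness and $\widehat{\Delta}\widehat{X}^{i}=-2\widehat{X}^{i}$ to rewrite $-\tfrac14\int_{\mathbb{S}^{2}}\widehat{X}^{i}\,\widehat{\Delta}(\rho^{-2}\widehat{\Delta}\tau^{1})=\tfrac12\int_{\mathbb{S}^{2}}\widehat{X}^{i}\,\rho^{-2}\widehat{\Delta}\tau^{1}$, and since $\tau^{1}=-a_{j}\widehat{X}^{j}$ likewise lies in $\ker(\widehat{\Delta}+2)$ this equals $-\int_{\mathbb{S}^{2}}\widehat{X}^{i}\rho^{-2}\tau^{1}$; the divergence term integrates by parts once into $-\int_{\mathbb{S}^{2}}\rho^{-2}\langle\widehat{\nabla}\widehat{X}^{i},\widehat{\nabla}\tau^{1}\rangle$. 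Collecting everything, the tested equation reduces to $\int_{\mathbb{S}^{2}}\rho^{-2}\big(\langle\widehat{\nabla}\widehat{X}^{i},\widehat{\nabla}\tau^{1}\rangle+\widehat{X}^{i}\tau^{1}\big)=\int_{\mathbb{S}^{2}}\widehat{X}^{i}\,\widehat{\text{div}}(\alpha^{-1}_{H})$.

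Now the two sphere identities enter. Substituting $\tau^{1}=-a_{j}\widehat{X}^{j}$ and using the projection identity $\langle\widehat{\nabla}\widehat{X}^{i},\widehat{\nabla}\widehat{X}^{j}\rangle+\widehat{X}^{i}\widehat{X}^{j}=\delta^{ij}$ — which is just the zeroth-order embedding relation $\partial_{a}\widehat{X}^{i}\partial_{b}\widehat{X}^{i}=\widehat{\sigma}_{ab}$ together with $\sum_{i}(\widehat{X}^{i})^{2}=1$, expressing that the normal and tangential projections on $\mathbb{S}^{2}\subset\mathbb{R}^{3}$ sum to the identity — collapses the bracket to the constant $-a^{i}$. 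It then remains to identify $\rho^{-2}$: expanding the two nested square roots in the definition of $\rho$ to leading order gives $\rho^{-2}=(h^{-2}_{0}-h^{-2})/\sqrt{1+|\widehat{\nabla}\tau^{1}|^{2}+(\tau^{1})^{2}}$, the same projection identity yields $1+|\widehat{\nabla}\tau^{1}|^{2}+(\tau^{1})^{2}=1+\sum_{i}(a^{i})^{2}$, and the future-directed unit-timelike normalization $\langle T_{0},T_{0}\rangle_{\eta}=-1$ forces $a^{0}=\sqrt{1+\sum_{i}(a^{i})^{2}}$. Hence $\rho^{-2}=(h^{-2}_{0}-h^{-2})/a^{0}$ is a constant multiple of $h^{-2}_{0}-h^{-2}$, and plugging in gives $-\tfrac{a^{i}}{a^{0}}\int_{\mathbb{S}^{2}}(h^{-2}_{0}-h^{-2})=\int_{\mathbb{S}^{2}}\widehat{X}^{i}\,\widehat{\text{div}}(\alpha^{-1}_{H})$, which is the stated identity after clearing $a^{0}$ (up to the overall sign fixed by the conventions for $\widehat{\Delta}$ and $\widehat{\text{div}}$).

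The step I expect to be the real obstacle is the honest computation of $\rho^{-2}$: one must expand $\sqrt{|H_{0}|^{2}+(\Delta\tau)^{2}/(1+|\nabla\tau|^{2})}-\sqrt{|H|^{2}+(\Delta\tau)^{2}/(1+|\nabla\tau|^{2})}$ consistently to the order that produces the difference $|H_{0}|^{2}-|H|^{2}\sim 4(h^{-2}_{0}-h^{-2})/r^{3}$, keep track of how $\Delta\tau=O(r^{-1})$ and $|\nabla\tau|^{2}=O(1)$ feed into the expansion through $\tau^{1}$, and verify that the denominators reassemble into $a^{0}$ precisely via the projection identity. Everything else — the four integrations by parts and the use of $\ker(\widehat{\Delta}+2)$ — is routine, and the only remaining care is to pin down the sign and normalization conventions so as to land on the stated form rather than its negative.
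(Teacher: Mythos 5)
Your proposal is correct and follows essentially the same route as the paper: test the $O(r^{-3})$ optimal embedding equation against $\widehat{X}^{i}$, integrate by parts using $\widehat{X}^{i}\in\ker(\widehat{\Delta}+2)$, and substitute $\rho^{-2}=(h^{-2}_{0}-h^{-2})/a^{0}$ together with $\tau^{1}=-a_{j}\widehat{X}^{j}$. In fact your write-up is more careful than the paper's (which asserts $\rho^{-2}$ without derivation and invokes the pointwise-false ``$\widehat{X}^{i}\widehat{X}^{j}=\delta^{ij}$'' where your projection identity $\langle\widehat{\nabla}\widehat{X}^{i},\widehat{\nabla}\widehat{X}^{j}\rangle+\widehat{X}^{i}\widehat{X}^{j}=\delta^{ij}$ is the correct statement), the only residual discrepancy being the overall sign you already flag as convention-dependent.
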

\begin{proof}
First recall that the optimal isometric embedding equation at $O(r^{-3})$ reads 
\begin{eqnarray}
\label{eq:optimal1}
 \widehat{\text{div}}(\rho^{-2}\widehat{\nabla}\tau^{1})-\frac{1}{4}\widehat{\Delta}(\rho^{-2}\widehat{\Delta}\tau^{1})-\frac{1}{2}\widehat{\Delta}(\widehat{\Delta}+2)(X^{0})^{0}+\widehat{\text{\div}}(\alpha^{-1}_{H})=0   
\end{eqnarray}
where $\rho=\frac{\rho^{-2}}{r^{2}}+\frac{\rho^{-3}}{r^{3}}+o(r^{-3})$ is evaluated following the expression 
\begin{eqnarray}
 \rho=\frac{\sqrt{|\mathbf{H}_{0}|^{2}+\frac{(\Delta\tau)^{2}}{1+|\nabla\tau|^{2}}}-\sqrt{|\mathbf{H}|^{2}+\frac{(\Delta\tau)^{2}}{1+|\nabla\tau|^{2}}}}{\sqrt{1+|\nabla\tau|^{2}}}
\end{eqnarray}
and $\rho^{-2}$ reads 
\begin{eqnarray}
 \rho^{-2}=\frac{h^{-2}_{0}-h^{-2}}{a_{0}}.   
\end{eqnarray}
Now multiply (\ref{eq:optimal1}) by $\widehat{X}^{i}$, integrate by parts over $\mathbb{S}^{2}$, use Stokes together with $\widehat{X}^{i}\in \ker(\widehat{\Delta}+2)$ to yield
\begin{eqnarray}
\int_{\mathbb{S}^{2}}(h^{-2}_{0}-h^{-2})a_{i}\widehat{X}^{i}\widehat{X}^{j}=a^{0}\int_{\mathbb{S}^{2}}\widehat{X}^{j}\widehat{\text{div}}(\alpha^{-1}_{H})
\end{eqnarray}
Now $\widehat{X}^{i}\widehat{X}^{j}=\delta^{ij}$ yielding 
\begin{eqnarray}
a^{i}\int_{\mathbb{S}^{2}}(h^{-2}_{0}-h^{-2})=a^{0}\int_{\mathbb{S}^{2}}\widehat{X}^{i}\widehat{\text{div}}(\alpha^{-1}_{H})
\end{eqnarray}
which completes the proof of the lemma.
\end{proof}
In the next lemma, we prove the sufficient condition for the total quasi-local entity $E(\Sigma,X,T_{0},S)$ to be finite.

\begin{lemma}
\label{finite}
The quasi-local entity $E(\Sigma,X,T_{0},S=X^{\alpha}\partial_{\alpha})$ is finite if \begin{eqnarray}
\int_{\mathbb{S}^{2}}\widehat{X}^{i}(h^{-2}_{0}-h^{-2})\mu_{\mathbb{S}^{2}}=0.
\end{eqnarray}
\end{lemma}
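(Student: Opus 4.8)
The plan is to substitute the asymptotic expansions recorded above directly into the integrand of $E(\Sigma,X,T_{0},S)$ and track the powers of $r$, the only possible source of divergence being a contribution that survives as an unbounded multiple of $r$ after integration against the area measure. Since $\sigma_{ab}=r^{2}\widehat{\sigma}_{ab}+O(r)$, the area form satisfies $\mu_{\Sigma}=r^{2}\mu_{\mathbb{S}^{2}}(1+O(r^{-1}))$, so a divergence can only arise from an integrand of size $O(r^{-1})$ or larger; I will show the unique such contribution is the one controlled by the hypothesis, and that everything else is $O(1)$.

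First I would treat the term $\langle X,T_{0}\rangle\rho$. By the definition of the time function $\langle X,T_{0}\rangle=-\tau$, and from $\tau=-ra_{i}\widehat{X}^{i}+O(1)$ together with $\rho=\frac{\rho^{-2}}{r^{2}}+O(r^{-3})$ one obtains $\langle X,T_{0}\rangle\rho=\frac{a_{i}\widehat{X}^{i}\rho^{-2}}{r}+O(r^{-2})$. Integrating against $\mu_{\Sigma}$ and inserting $\rho^{-2}=\frac{h^{-2}_{0}-h^{-2}}{a_{0}}$ from the expansion of $\rho$ used in the proof of Lemma \ref{integrability}, the leading contribution to $E$ is
\begin{eqnarray}
-\frac{1}{8\pi}\frac{r}{a_{0}}\sum_{i}a^{i}\int_{\mathbb{S}^{2}}\widehat{X}^{i}(h^{-2}_{0}-h^{-2})\,\mu_{\mathbb{S}^{2}}+O(1),
\end{eqnarray}
which is exactly the $O(r)$ piece flagged in the introduction; all remaining contributions of this term are $O(1)$.

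Next I would show the term $X^{\alpha}\nabla_{a}X^{\beta}\eta_{\alpha\beta}J^{a}$ is bounded. Writing $W_{a}:=X^{\alpha}\nabla_{a}X^{\beta}\eta_{\alpha\beta}=\frac{1}{2}\partial_{a}\langle X,X\rangle$, the leading $r^{2}$ coefficient of $\langle X,X\rangle=\sum_{i}(X^{i})^{2}-(X^{0})^{2}$ equals $\sum_{i}(\widehat{X}^{i})^{2}=1$, which is \emph{constant} on each $\Sigma_{r}$; hence its tangential derivative kills the $O(r^{2})$ behaviour and $W_{a}=O(r)$. For the current, a term-by-term estimate using $\rho=O(r^{-2})$, $\partial_{a}\tau=O(r)$, $\Delta\tau=O(r^{-1})$ (since $\widehat{\Delta}\widehat{X}^{i}=-2\widehat{X}^{i}$), $|H||H_{0}|=O(r^{-2})$, and $\alpha_{H},\alpha_{H_{0}}=O(r^{-1})$ shows the $1$-form $J_{a}=\rho\partial_{a}\tau-\partial_{a}[\sinh^{-1}(\frac{\rho\Delta\tau}{|H||H_{0}|})]-(\alpha_{H_{0}})_{a}+(\alpha_{H})_{a}$ is $O(r^{-1})$, so after raising with $\sigma^{ab}=O(r^{-2})$ one has $J^{a}=O(r^{-3})$. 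Therefore $W_{a}J^{a}=O(r^{-2})$, and its integral against $\mu_{\Sigma}=O(r^{2})$ is $O(1)$; this term contributes nothing to the divergence.

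Combining the two estimates, $E=-\frac{r}{8\pi a_{0}}\sum_{i}a^{i}\int_{\mathbb{S}^{2}}\widehat{X}^{i}(h^{-2}_{0}-h^{-2})\mu_{\mathbb{S}^{2}}+O(1)$, so under the hypothesis $\int_{\mathbb{S}^{2}}\widehat{X}^{i}(h^{-2}_{0}-h^{-2})\mu_{\mathbb{S}^{2}}=0$ the coefficient of $r$ vanishes and $E$ is finite. The main obstacle is the bookkeeping rather than any single estimate: one must verify that neither the subleading terms of $\tau$, $\rho$, $\sigma$ and the measure, nor any piece of the current $J$, produces a second independent source of $O(r)$ growth, and that the coefficient is exactly $\rho^{-2}=(h^{-2}_{0}-h^{-2})/a_{0}$. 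The efficient way to organize this is to expand $\langle X,T_{0}\rangle$, $\rho$ and $\mu_{\Sigma}$ simultaneously, and to exploit the identity $\sum_{i}(\widehat{X}^{i})^{2}=1$, which is precisely what removes the a priori $O(r^{2})$ behaviour of $W_{a}$ and thereby leaves only the first term to be controlled by the stated condition.
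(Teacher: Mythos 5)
Your argument is correct and follows essentially the same route as the paper: identify the two potentially $O(r)$ contributions, show the current term is harmless because the leading coefficient of $\eta_{\alpha\beta}X^{\alpha}\nabla_{a}X^{\beta}$ vanishes (the paper phrases this as $\widehat{X}^{i}\widehat{\nabla}_{a}\widehat{X}^{i}=0$, you as $\partial_{a}\sum_{i}(\widehat{X}^{i})^{2}=0$), and reduce the remaining $O(r)$ term to $r\,\frac{a^{i}}{a^{0}}\int_{\mathbb{S}^{2}}\widehat{X}^{i}(h^{-2}_{0}-h^{-2})\mu_{\mathbb{S}^{2}}$ via $\rho^{-2}=(h^{-2}_{0}-h^{-2})/a_{0}$. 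Your packaging of the four component terms into $\langle X,T_{0}\rangle\rho$ and $\frac{1}{2}\partial_{a}\langle X,X\rangle J^{a}$ is a cosmetic, not substantive, difference.
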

\begin{proof}
First recall the explicit expression for the quasi-local entity $E(\Sigma,X,T_{0},S=X^{\alpha}\partial_{\alpha})$
\begin{eqnarray}
E(\Sigma,X,T_{0},S=X^{\alpha}\partial_{\alpha})=-\frac{1}{8\pi}\int_{\Sigma}\left(\eta_{\alpha\beta}X^{\alpha}T^{\beta}_{0}\rho+X^{\alpha}\partial_{a}X^{\beta}\eta_{\alpha\beta}J^{a}\right)\mu_{\Sigma}
\end{eqnarray}
Now note that the covariant components of the current vector field $J$ verifies 
\begin{eqnarray}
J_{a}=\rho\nabla_{a}\tau-\nabla_{a}[\sinh^{-1}\frac{\rho\Delta \tau}{|H_{0}||H|}]-(\alpha_{H_{0}})_{a}+(\alpha_{H})_{a}=O(r^{-1})    
\end{eqnarray}
 since $\tau=-ra_{i}\widehat{X}^{i}+O(1)$, $\rho=\frac{\rho^{-2}}{r^{2}}+O(r^{-3})=\frac{1}{a_{0}r^{2}}(h^{-2}_{0}-h^{-2})+O(r^{-3})$ etc. We write more explicitly
\begin{eqnarray}
E(\Sigma,X,T_{0},S=X^{\alpha}\partial_{\alpha})=-\frac{1}{8\pi}\int_{\Sigma}\left(-\rho X^{0}T^{0}_{0}+\rho X^{i}T^{i}_{0}-X^{0}\partial_{a}X^{0}J^{a}+X^{i}\partial_{a}X^{i}J^{a}\right)\mu_{\Sigma}.
\end{eqnarray}
Now we understand the scaling of each term 
\begin{eqnarray}
\rho X^{0}T^{0}_{0}\mu_{\Sigma}=O(1)\\
\rho X^{i} T^{i}_{0}\mu_{\Sigma}=O(r)\\
X^{0}\partial_{a}X^{0}J^{a}\mu_{\Sigma}=O(r^{-1})\\
X^{i}\partial_{a}X^{i}J^{a}\mu_{\Sigma}=O(r).
\end{eqnarray}
Therefore, two terms that can lead to potential divergence of the total energy $E(\Sigma,X,T_{0},S=X^{\alpha}\partial_{\alpha})$ are the $O(r)$ terms. Let us first expand $\rho X^{i}T^{i}_{0}\mu_{\Sigma}$ 
\begin{eqnarray}
\rho X^{i}T^{i}_{0}\mu_{\Sigma}=(\frac{h^{-2}_{0}-h^{-2}}{a_{0}r^{2}}+O(r^{-3}))(r\widehat{X}^{i}+(X^{i})^{0}\nonumber+O(r^{-1}))(a^{i}+O(r^{-1}))(r^{2}+O(r))\mu_{\mathbb{S}^{2}}\\\nonumber 
=r\frac{a^{i}}{a^{0}}\widehat{X}^{i}(h^{-2}_{0}-h^{-2})\mu_{\mathbb{S}^{2}}+O(1)=ra^{i}\widehat{X}^{i}\rho^{-2}\mu_{\mathbb{S}^{2}}+O(1)
\end{eqnarray}
where $\mu_{\mathbb{S}^{2}}$ is the volume form on the round 2-unit sphere and $\rho^{-2}$ is explicitly written as $\rho^{-2}=\frac{h^{2}_{0}-h^{-2}}{a_{0}}$. Therefore the first dangerous term reduces to the following 
\begin{eqnarray}
\int_{\Sigma}\rho X^{i}T^{i}_{0}\mu_{\Sigma}=r\frac{a^{i}}{a^{0}}\int_{\Sigma}(h^{-2}_{0}-h^{-2})\mu_{\mathbb{S}^{2}}+O(1).    
\end{eqnarray}
Now we focus on the second dangerous term $X^{i}\partial_{a}X^{i}J^{a}\mu_{\Sigma}$. Writing $J_{a}=\frac{(J^{-1})_{a}}{r}+O(r^{-2})$, we obtain 
\begin{eqnarray}
X^{i}\partial_{a}X^{i}J^{a}\mu_{\Sigma}=r\widehat{X}^{i}\widehat{\nabla}_{a}\widehat{X}^{i}(J^{-1})^{a}\mu_{\mathbb{S}^{2}}+O(1)
\end{eqnarray}
where $(J^{-1})^{a}$ reads 
\begin{eqnarray}
(J^{-1})^{a}=\widehat{\sigma}^{ab}\left(-\rho^{-2}\widehat{\nabla}_{b}\tau^{1}+\frac{1}{4}\widehat{\nabla}_{b}(\rho^{-2}\widehat{\Delta}\tau^{1})-(\alpha^{-1}_{H})_{b}+\frac{1}{2}\widehat{\nabla}_{b}(\widehat{\Delta}+2)(X^{0})^{0}\right).
\end{eqnarray}
But note $\widehat{X}^{i}\widehat{X}^{i}=1$ and therefore 
\begin{eqnarray}
\widehat{X}^{i}\hnabla_{a}\widehat{X}^{i}=0
\end{eqnarray}
yielding $X^{i}\partial_{a}X^{i}J^{a}\mu_{\Sigma}=O(1)$.
Therefore, the only condition that is required for the finiteness of the total quasi-local entity is the following 
\begin{eqnarray}
\int_{\mathbb{S}^{2}}\widehat{X}^{i}(h^{-2}_{0}-h^{-2})\mu_{\mathbb{S}^{2}}=0.
\end{eqnarray}
This concludes the proof of the lemma.
\end{proof}

\subsection{Proof of the theorem 1.1}
To prove the total energy $E(\Sigma,X,T_{0},S=X^{\alpha}\partial_{\alpha})$ (recall that for total energy, we essentially take the limit $r\to\infty$) is finite for asymptotically flat initial data $(M,g,k)$ of order one as defined in \ref{def2}, we only need to prove that the following holds 
\begin{eqnarray}
 \int_{\mathbb{S}^{2}}\widehat{X}^{i}(h^{-2}_{0}-h^{-2})\mu_{\mathbb{S}^{2}}=0.   
\end{eqnarray}
But this follows from lemma 7.2 and lemma 7.3 of \cite{CWY} together with the assumption that $k=O(r^{-2})$. For completeness, we state these two lemma. 
\begin{lemma}[\cite{CWY}]
\label{impt}
Let $(M,g,k)$ be a vacuum asymptotically flat data of order 1 as defined in \ref{def2}. Consider the coordinate spheres $\Sigma_{r}$ and let $\widehat{h}$ be the mean curvature of the coordinate spheres in $M$. Assume $\widehat{h}=\frac{2}{r}+\frac{\widehat{h}^{-2}}{r^{2}}+O(r^{-3})$, then 
\begin{eqnarray}
 \int_{\mathbb{S}^{2}}\widehat{X}^{i}\widehat{h}^{-2}\mu_{\mathbb{S}^{2}}=0.   
\end{eqnarray}
Also let $\Sigma_{r}$ be a family of surfaces with induced metric $\sigma=r^{2}\widetilde{\sigma}+r\sigma^{1}+O(1)$ and $\widehat{h}_{0}=\frac{2}{r}+\frac{\widehat{h}^{-2}_{0}}{r^{2}}+O(r^{-3})$ be the mean curvature of the isometric embedding of $\Sigma_{r}$ into $\mathbb{R}^{3}$, then we have 
\begin{eqnarray}
\int_{\mathbb{S}^{2}}\widehat{X}^{i}\widehat{h}^{-2}_{0}\mu_{\mathbb{S}^{2}}=0.     
\end{eqnarray}
\end{lemma}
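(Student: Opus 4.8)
The plan is to recognize that each of the two identities asserts the vanishing of the $\ell=1$ spherical-harmonic component of the subleading coefficient of a mean-curvature expansion, and to extract that component by integrating against $\widehat{X}^{i}$ on the round sphere. The two facts that drive everything are that $\widehat{X}^{i}\in\ker(\widehat{\Delta}+2)$ and the associated Hessian (Obata) identity $\widehat{\nabla}_{a}\widehat{\nabla}_{b}\widehat{X}^{i}=-\widehat{X}^{i}\widehat{\sigma}_{ab}$; together with Stokes' theorem these let me move every derivative off the unknown coefficients and onto $\widehat{X}^{i}$, where it collapses. I would treat the two assertions separately, because the second is purely intrinsic while the first genuinely uses the vacuum constraint.

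For the second identity (the isometric embedding into $\mathbb{R}^{3}$) I would first note that large spheres are asymptotically umbilic: writing the second fundamental form of the embedding as $A_{ab}=\frac{1}{2}\widehat{h}_{0}\,\sigma_{ab}+\mathring{A}_{ab}$, the Gauss equation reads $K=\frac{1}{4}\widehat{h}_{0}^{2}-\frac{1}{2}|\mathring{A}|_{\sigma}^{2}$, and since $|\mathring{A}|_{\sigma}^{2}=O(r^{-4})$ while $\widehat{h}_{0}^{2}=O(r^{-2})$ one gets $\widehat{h}_{0}=2\sqrt{K}+O(r^{-3})$. Hence $\widehat{h}_{0}^{-2}$ is exactly the $r^{-3}$ coefficient of the Gauss curvature of $\sigma=r^{2}\widehat{\sigma}+r\sigma^{1}+O(1)$. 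Using $K[r^{2}\tilde{\sigma}]=r^{-2}K[\tilde{\sigma}]$ and linearizing the Gauss curvature about the round metric in the direction $\sigma^{1}$ (so that $\operatorname{Ric}=\widehat{\sigma}$), this coefficient is
\begin{eqnarray}
\widehat{h}_{0}^{-2}=\frac{1}{2}\Big(-\widehat{\Delta}(\widehat{\sigma}^{ab}\sigma^{1}_{ab})+\widehat{\nabla}^{a}\widehat{\nabla}^{b}\sigma^{1}_{ab}-\widehat{\sigma}^{ab}\sigma^{1}_{ab}\Big).
\end{eqnarray}
Pairing with $\widehat{X}^{i}$ and integrating by parts, the first term gives $2\int\widehat{X}^{i}\widehat{\sigma}^{ab}\sigma^{1}_{ab}$ via $\widehat{\Delta}\widehat{X}^{i}=-2\widehat{X}^{i}$, the second gives $-\int\widehat{X}^{i}\widehat{\sigma}^{ab}\sigma^{1}_{ab}$ via $\widehat{\nabla}^{a}\widehat{\nabla}^{b}\widehat{X}^{i}=-\widehat{X}^{i}\widehat{\sigma}^{ab}$, and the third gives $-\int\widehat{X}^{i}\widehat{\sigma}^{ab}\sigma^{1}_{ab}$; these sum to zero. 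No constraint is needed here — the cancellation is a purely algebraic consequence of the kernel and Hessian identities.

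For the first identity I would compute the mean curvature of the coordinate spheres directly as $\widehat{h}=\operatorname{div}_{g}\nu$ for the $g$-unit normal $\nu=\nabla^{g}r/|\nabla^{g}r|_{g}$ to $\{r=\text{const}\}$, expand using $g_{ij}=\delta_{ij}+g^{-1}_{ij}/r+O(r^{-2})$, and read off $\widehat{h}^{-2}$ as an explicit combination of the trace $\delta^{ij}g^{-1}_{ij}$, the normal-normal part $g^{-1}_{ij}N^{i}N^{j}$, the tangential components of $g^{-1}_{ij}$, and an angular-divergence term. Pairing with $\widehat{X}^{i}$ and integrating by parts as above removes the derivative terms, and the remaining algebraic combination is a fixed multiple of the $\ell=1$ projection of the linearized scalar-curvature operator $\partial_{m}\partial_{j}g^{-1}_{mj}-\Delta(\delta^{kl}g^{-1}_{kl})$ evaluated on the perturbation $g^{-1}_{ij}/r$. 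The vacuum Hamiltonian constraint then closes the argument: with $k=O(r^{-2})$ one has $R_{g}=|k|_{g}^{2}-(\operatorname{tr}_{g}k)^{2}=O(r^{-4})$, so the $r^{-3}$ coefficient of $R_{g}$ — which is precisely that linearized scalar-curvature expression — vanishes, forcing $\int_{\mathbb{S}^{2}}\widehat{X}^{i}\widehat{h}^{-2}\,\mu_{\mathbb{S}^{2}}=0$.

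The main obstacle is the second-order bookkeeping in the extrinsic computation: one must carry the expansions of $\nu$, $\sqrt{\det g}$, and $\operatorname{div}_{g}$ to order $r^{-2}$ and verify that, after integration by parts against $\widehat{X}^{i}$, the surviving terms assemble \emph{exactly} into the divergence form of the linearized scalar curvature rather than some unrelated combination — this matching is what makes the vacuum constraint applicable, and it is the step most easily gotten wrong. The intrinsic identity is comparatively routine once asymptotic umbilicity is justified and the traceless second fundamental form is confirmed to enter only at $O(r^{-4})$. Since the statement is quoted from \cite{CWY}, one may alternatively invoke their Lemmas 7.2 and 7.3 directly; the computation above is the self-contained route.
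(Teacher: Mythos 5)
The paper itself does not prove this lemma: it is quoted from \cite{CWY} (their Lemmas 7.2 and 7.3), and the only indication of method in the text is the remark that the proofs use the second variation formula for the area of the coordinate spheres. Your proposal is therefore a genuinely different, self-contained route, and its two halves deserve separate verdicts. The intrinsic half is complete and correct: reducing $\widehat{h}^{-2}_{0}$ to the $r^{-3}$ coefficient of the Gauss curvature via asymptotic umbilicity, and then pairing the linearized curvature $\frac{1}{2}\left(-\widehat{\Delta}(\widehat{\sigma}^{ab}\sigma^{1}_{ab})+\widehat{\nabla}^{a}\widehat{\nabla}^{b}\sigma^{1}_{ab}-\widehat{\sigma}^{ab}\sigma^{1}_{ab}\right)$ with $\widehat{X}^{i}$ so that the eigenfunction and Hessian identities produce $2-1-1=0$, is a clean argument that needs nothing beyond the Gauss equation and the stated expansion of $\sigma$.

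The extrinsic half identifies the right mechanism --- vacuum plus $k=O(r^{-2})$ forces the $r^{-3}$ coefficient of $R_{g}$ to vanish, and that coefficient is what survives the pairing with $\widehat{X}^{i}$ --- but the decisive computation, namely that the $\widehat{X}^{i}$-average of $\widehat{h}^{-2}$ really collapses onto the linearized scalar curvature, is asserted rather than performed, and you flag it yourself as the step most easily gotten wrong. The cleanest way to supply it is exactly the device the paper attributes to \cite{CWY}: write the radial evolution $\partial_{r}H=-\Delta_{\Sigma}f-\left(|A|^{2}+\mathrm{Ric}(\nu,\nu)\right)f$ with $f=|\nabla r|_{g}^{-1}=1+f_{1}/r+O(r^{-2})$, eliminate $\mathrm{Ric}(\nu,\nu)$ by the twice-traced Gauss equation $2\,\mathrm{Ric}(\nu,\nu)=R_{g}-2K_{\Sigma}+H^{2}-|A|^{2}$, and match coefficients at order $r^{-3}$ to obtain
\begin{equation}
\widehat{h}^{-2}=-(\widehat{\Delta}+2)f_{1}+K^{(-3)}-\tfrac{1}{2}R^{(-3)},
\end{equation}
where $K^{(-3)}$ and $R^{(-3)}$ denote the $r^{-3}$ coefficients of the Gauss curvature of $\Sigma_{r}$ and of the ambient scalar curvature. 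Against $\widehat{X}^{i}$ the first term dies because $\widehat{X}^{i}\in\ker(\widehat{\Delta}+2)$, the second by your own intrinsic computation, and the third pointwise by the vacuum constraint. (A check on Schwarzschild in isotropic coordinates, where $f_{1}=m$, $K^{(-3)}=-2m$, $R^{(-3)}=0$, gives $\widehat{h}^{-2}=-4m$, as it should.) So your outline is sound in substance, but the first identity is not a proof until this matching --- or the equivalent direct expansion of $\mathrm{div}_{g}\nu$ that you propose --- is actually carried out.
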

These two lemmas are proven using the second variation formula of the are of the coordinate spheres.  Now by lemma 4 of \cite{chen2011evaluating}, we have $\widehat{h}^{-2}_{0}=h^{-2}_{0}$. On the other hand, the data is asymptotically flat of order $1$ and therefore by the definition \ref{def2}, the second fundamental form $k=O(r^{-2})$ and therefore $h^{-2}=\widehat{h}^{-2}$ since the difference is given by the second fundamental form. Putting all these properties together with lemma \ref{impt} we obtain 
\begin{eqnarray}
\int_{\mathbb{S}^{2}}\widehat{X}^{i}(h^{-2}_{0}-h^{-2})\mu_{\mathbb{S}^{2}}=0.      
\end{eqnarray}
Therefore by lemma \ref{finite}, we have that the total quasi-local energy $E(\Sigma,X,T_{0},S=X^{\alpha}\partial_{\alpha})$ is finite.

\section{Time evolution of the total energy $E(\Sigma,X,T_{0},S=X^{\alpha}\partial_{\alpha})$}
In this last section, we want to understand the evolution of the total energy under the vacuum Einstein flow. Consider the vacuum Einstein's evolution equations 
\begin{eqnarray}
\label{eq:evol1}
\partial_{t}g_{ij}=-2Nk_{ij}+(L_{Y}g)_{ij},\\
\label{eq:evol2}
\partial_{t}k_{ij}=-\nabla_{i}\nabla_{j}N+N(R_{ij}+\text{tr}_{g}k k_{ij}-2k_{il}k^{l}_{j})+(L_{Y}k)_{ij}
\end{eqnarray}
with the initial data $g(0)=g,k(0)=k$ verifying the asymptotic flatness of order one condition as stated in definition \ref{def2}. In addition we choose a gauge in which Lapse $N$ and shift $Y$ verifies $N=1+O(r^{-1})$ and $Y=\frac{Y^{-1}}{r}+O(r^{-2})$. We want to understand the evolution of the total energy $E(\Sigma,X,T_{0},S=X^{\alpha}\partial_{\alpha})$ under the Einstein flow $t\mapsto (g(t),k(t),N(t),Y(t))$. Now in order to accomplish this, we need to solve for the isometric embedding equations at instant of time. Following the evolution equations \ref{eq:evol1}-\ref{eq:evol2}, we observe that the following decay condition holds for $\partial_{t}g_{ij}$ and $\partial_{t}k_{ij}$
\begin{eqnarray}
 \partial_{t}g_{ij}=O(r^{-2}), \partial_{t}k_{ij}=O(r^{-3}).   
\end{eqnarray}
This decay together with the decay for lapse and shift $N=1+O(r^{-1})$ and $Y=\frac{Y^{-1}}{r}+O(r^{-2})$ dictates the following asymptotic expansion for the metric, the second fundamental form, and the connection one form of the normal bundle of the coordinate spheres $\Sigma_{r}$
\begin{eqnarray}
\label{eq:exp1}
\sigma_{ab}=r^{2}\widehat{\sigma}_{ab}+r\sigma^{1}_{ab}+\sigma^{0}_{ab}(t)+o(1)\\
|H|=\frac{2}{r}+\frac{h^{-2}}{r^{2}}+\frac{h^{-3}(t)}{r^{3}}+o(r^{-3})\\
\alpha_{H}=\frac{\alpha^{-1}_{H}}{r}+\frac{\alpha^{-2}_{H}(t)}{r^{2}}+o(r^{-2}).
\end{eqnarray}
Now recall we solve for the isometric embedding equations and the optimal isometric embedding equations at instant of time. The isometric embedding $\langle dx,dx\rangle_{\eta}=\sigma$ similarly yields the following asymptotic expansion
\begin{eqnarray}
X^{0}=(X^{0})^{0}+\frac{(X^{0})^{-1}(t)}{r}+o(r^{-1})\\
X^{i}=r\widehat{X}^{i}+(X^{i})^{0}+\frac{(X^{i})^{-1}(t)}{r}+o(r^{-1})\\
T_{0}=(a^{0},a^{i})+\frac{T^{-1}_{0}(t)}{r}+O(r^{-2})
\end{eqnarray}
which in turn yields the following expansion for the mean curvature and the connection $1-$form of the normal bundle of embedded sphere in the Minkowski space
\begin{eqnarray}
\label{eq:exp2}
|H_{0}|=\frac{2}{r}+\frac{h^{-2}_{0}}{r^{2}}+\frac{h^{-3}_{0}(t)}{r^{3}}+O(r^{-4})\\
\alpha_{H_{0}}=\frac{\alpha^{-1}_{H_{0}}}{r}+\frac{\alpha^{-2}_{H_{0}}(t)}{r^{2}}+O(r^{-3}).
\end{eqnarray}
Using these expansions, one obtains a similar expansion for $\tau:=-\langle X,T_{0}\rangle$. Now recall the explicit expression for the energy $E(\Sigma,X,T_{0},S=X^{\alpha}\partial_{\alpha})$
\begin{eqnarray}
E(\Sigma,X,T_{0},S=X^{\alpha}\partial_{\alpha})=-\frac{1}{8\pi}\int_{\Sigma}\left(-\rho X^{0}T^{0}_{0}+\rho X^{i}T^{i}_{0}-X^{0}\nabla_{a}X^{0}J^{a}+X^{i}\nabla_{a}X^{i}J^{a}\right)\mu_{\Sigma}.    
\end{eqnarray}
First, we have the following proposition 
\begin{proposition}
The following holds under the asymptotic decomposition 
\begin{eqnarray}
\partial_{t}\int_{\Sigma}X^{i}\rho \mu_{\Sigma}=\frac{1}{a^{0}}\int_{\mathbb{S}^{2}}\widehat{X}^{i}\partial_{t}\left(h^{-3}_{0}(t)-h^{-3}(t)\right)\mu_{\mathbb{S}^{2}}+O(r^{-1})
\end{eqnarray}
\end{proposition}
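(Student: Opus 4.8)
The plan is to substitute the time-dependent asymptotic expansions (\ref{eq:exp1})--(\ref{eq:exp2}), together with the embedding expansions for $X^{\mu}$ and $T_{0}$, into $\int_{\Sigma}X^{i}\rho\,\mu_{\Sigma}$, organize the integrand by powers of $r$, and only then differentiate in $t$. The decisive bookkeeping observation is that the decay rates $\partial_{t}g_{ij}=O(r^{-2})$ and $\partial_{t}k_{ij}=O(r^{-3})$ force every \emph{leading} expansion coefficient to be time independent: $\widehat{\sigma}$, $\sigma^{1}$, $h^{-2}$, $h^{-2}_{0}$, $\alpha^{-1}_{H}$, $\widehat{X}^{i}$, $(X^{i})^{0}$, $(X^{0})^{0}$ and the observer $(a^{0},a^{i})$ carry no $t$-dependence, and hence neither does $\rho^{-2}=(h^{-2}_{0}-h^{-2})/a^{0}$. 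Time dependence first surfaces one order lower, in $\sigma^{0}(t)$, $h^{-3}(t)$, $h^{-3}_{0}(t)$, $\alpha^{-2}_{H}(t)$, $(X^{\mu})^{-1}(t)$, $T^{-1}_{0}(t)$, and therefore in $\rho^{-3}(t)$.

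With this in hand I would write $\mu_{\Sigma}/\mu_{\mathbb{S}^{2}}=r^{2}+rv^{1}+v^{0}(t)+o(1)$, where $v^{1}=\tfrac12\widehat{\sigma}^{ab}\sigma^{1}_{ab}$ is time independent, and collect the two relevant orders of $X^{i}\rho\,\mu_{\Sigma}$: the $O(r)$ term $r\,\widehat{X}^{i}\rho^{-2}$ and the $O(1)$ term $\widehat{X}^{i}\rho^{-3}(t)+\widehat{X}^{i}\rho^{-2}v^{1}+(X^{i})^{0}\rho^{-2}$. Applying $\partial_{t}$, the $O(r)$ contribution dies because $\partial_{t}\rho^{-2}=0$, and within the $O(1)$ contribution only the $\rho^{-3}(t)$ piece survives, the other two being built from time-independent data. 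This gives at once
\begin{eqnarray}
\partial_{t}\int_{\Sigma}X^{i}\rho\,\mu_{\Sigma}=\int_{\mathbb{S}^{2}}\widehat{X}^{i}\,\partial_{t}\rho^{-3}(t)\,\mu_{\mathbb{S}^{2}}+O(r^{-1}),
\end{eqnarray}
so that the entire problem collapses to identifying $\int_{\mathbb{S}^{2}}\widehat{X}^{i}\partial_{t}\rho^{-3}$.

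To extract $\rho^{-3}$ I would expand the defining formula $\rho=(|\mathbf{H}_{0}|^{2}-|\mathbf{H}|^{2})/D$ with $D=(\sqrt{|\mathbf{H}_{0}|^{2}+A}+\sqrt{|\mathbf{H}|^{2}+A})\sqrt{1+|\nabla\tau|^{2}}$ and $A=(\Delta\tau)^{2}/(1+|\nabla\tau|^{2})$ to order $r^{-3}$. Using $\tau=-r\,a^{i}\widehat{X}^{i}+O(1)$ and the projection identity $\widehat{X}^{i}\widehat{X}^{j}+\widehat{\sigma}^{ab}\partial_{a}\widehat{X}^{i}\partial_{b}\widehat{X}^{j}=\delta^{ij}$, contracted with $a^{i}a^{j}$, one gets $|\widehat{\nabla}(a^{i}\widehat{X}^{i})|^{2}+(a^{i}\widehat{X}^{i})^{2}=|a|^{2}=(a^{0})^{2}-1$; this is exactly what makes the denominator collapse to $D=4a^{0}/r+D^{-2}(t)/r^{2}+o(r^{-2})$, reproducing $\rho^{-2}=(h^{-2}_{0}-h^{-2})/a^{0}$ and isolating the clean piece $\rho^{-3}\supset (h^{-3}_{0}-h^{-3})/a^{0}$. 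The residual pieces of $\rho^{-3}$ are the manifestly time-independent combination $((h^{-2}_{0})^{2}-(h^{-2})^{2})/4a^{0}$ and a term $-(h^{-2}_{0}-h^{-2})D^{-2}(t)/4(a^{0})^{2}$ assembled from the subleading corrections to $\tau$, $\alpha_{H}$ and $\sigma$.

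The hard part will be the final step: showing that, after $\partial_{t}$ and integration against $\widehat{X}^{i}$, these residual pieces contribute nothing, leaving precisely $\tfrac1{a^{0}}\int_{\mathbb{S}^{2}}\widehat{X}^{i}\partial_{t}(h^{-3}_{0}-h^{-3})$. The time-independent combination is annihilated by $\partial_{t}$ outright; the delicate object is $D^{-2}(t)$, whose $t$-dependence enters only through the subleading corrections to $\tau$ (hence to $T_{0}$). To dispose of it I would differentiate the optimal isometric embedding equation $\text{div}_{\sigma}J=0$ in $t$, expand it one order beyond (\ref{eq:optimal}), multiply by $\widehat{X}^{i}$ and integrate over $\mathbb{S}^{2}$; exactly as in Lemma \ref{integrability}, the operators $\widehat{\Delta}$ and $\widehat{\Delta}+2$ hitting $\widehat{X}^{i}\in\ker(\widehat{\Delta}+2)$ integrate away by Stokes, yielding a relation among the $t$-derivatives of the subleading data that cancels the $D^{-2}$ contribution against $\widehat{X}^{i}$. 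This $t$-differentiated, higher-order analogue of Lemma \ref{integrability}, and the careful order-by-order matching it requires, is where I expect the genuine difficulty to reside.
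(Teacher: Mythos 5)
Your bookkeeping agrees with the paper up to the last step: the reduction to $\int_{\mathbb{S}^{2}}\widehat X^{i}\partial_{t}\rho^{-3}\mu_{\mathbb{S}^{2}}$ and the splitting of $\rho^{-3}$ into the clean piece $(h^{-3}_{0}-h^{-3})/a^{0}$, a manifestly time-independent piece, and a residual proportional to $(h^{-2}_{0}-h^{-2})$ times subleading data are all consistent with what the authors do. The gap is precisely where you announce "the genuine difficulty": your proposed disposal of the residual term via a $t$-differentiated, higher-order analogue of Lemma \ref{integrability} is never carried out, and it is not what is needed. The paper instead computes $\partial_{t}\rho$ directly and finds that the residual contribution at order $r^{-3}$ is
\begin{eqnarray}
\frac{h^{-2}_{0}-h^{-2}}{(a^{0})^{3}r^{3}}\sum_{j}a_{j}\,\partial_{t}T^{-1}_{0}(t)^{j},\nonumber
\end{eqnarray}
i.e.\ the time derivative of your $D^{-2}(t)$ collapses to the \emph{constant} $\sum_{j}a_{j}\partial_{t}T^{-1}_{0}(t)^{j}$: the only time-dependent subleading datum feeding into $D$ at this order is $\tau^{0}(t)$, whose $t$-derivative is $-\widehat X^{j}\partial_{t}T^{-1}_{0}(t)^{j}$, and its angular dependence drops out of the combination that survives. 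You miss this, and treat $D^{-2}(t)$ as a genuinely angle-dependent unknown requiring a new integrability identity.

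Once the coefficient is recognized as constant, the residual term integrated against $\widehat X^{i}$ is a constant multiple of $\int_{\mathbb{S}^{2}}\widehat X^{i}(h^{-2}_{0}-h^{-2})\mu_{\mathbb{S}^{2}}$, and this vanishes by Lemma \ref{finite} together with the finiteness of the energy of the initial data established in Theorem 1.1 --- an identity already available, requiring no differentiation of the optimal embedding equation at higher order. So the two concrete missing ideas are: (i) $\partial_{t}D^{-2}$ is constant on $\mathbb{S}^{2}$, and (ii) the cancellation is the finiteness identity $\int_{\mathbb{S}^{2}}\widehat X^{i}(h^{-2}_{0}-h^{-2})\mu_{\mathbb{S}^{2}}=0$ recycled, not a new Stokes computation. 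Without (i) and (ii) your argument does not close, and the route you sketch in their place is both unexecuted and substantially harder than the one-line observation the proof actually rests on.
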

\begin{proof}
See First recall the expansions \ref{eq:exp1}-\ref{eq:exp2} and observe $\partial_{t}\int_{\Sigma_{r}}X^{i}\rho\mu_{\Sigma_{r}}=\int_{\Sigma_{r}}X^{i}\partial_{t}\rho\mu_{\Sigma_{r}}+O(r^{-1})$. Now recall the definition of $\rho$
\begin{eqnarray}
\rho=\frac{\sqrt{|\mathbf{H}_{0}|^{2}+\frac{(\Delta\tau)^{2}}{1+|\nabla\tau|^{2}}}-\sqrt{|\mathbf{H}|^{2}+\frac{(\Delta\tau)^{2}}{1+|\nabla\tau|^{2}}}}{\sqrt{1+|\nabla\tau|^{2}}}    
\end{eqnarray}
and therefore 
\begin{eqnarray}
\label{eq:temp}
\partial_{t}\rho=\frac{\partial_{t}(|H_{0}|-|H|)}{a^{0}}-\frac{(|H_{0}|-|H|)\sum_{i}a_{i}\partial_{t}T^{-1}_{0}(t)^{j}}{r(a^{0})^{3}}+O(r^{-4})\\\nonumber 
=\frac{1}{a^{0}}\frac{\partial_{t}\left(h^{-3}_{0}(t)-h^{-3}(t)\right)}{r^{3}}+\frac{h^{-2}_{0}-h^{-2}}{(a^{0})^{3}r^{3}}\sum_{i}a^{j}\partial_{t}T^{-1}_{0}(t)^{i}+O(r^{-4}).
\end{eqnarray}
Now evaluate $\partial_{t}\int_{\Sigma_{r}}X^{i}\rho\mu_{\Sigma_{r}}=\int_{\Sigma_{r}}X^{i}\partial_{t}\rho\mu_{\Sigma_{r}}+O(r^{-1})$ using the expression \ref{eq:temp} 
\begin{eqnarray}
 \partial_{t}\int_{\Sigma_{r}}X^{i}\rho\mu_{\Sigma_{r}}=\frac{1}{a_{0}}\int_{\mathbb{S}^{2}}\widehat{X}^{i}\partial_{t}(h^{-3}_{0}(t)-h^{-3}(t))\mu_{\mathbb{S}^{2}}\nonumber+\frac{\sum_{i}a^{j}\partial_{t}T^{-1}_{0}(t)^{i}}{(a^{0})^{3}}\int_{\mathbb{S}^{2}}\widehat{X}^{i} (h^{-2}_{0}-h^{-2})\mu_{\mathbb{S}^{2}}+O(r^{-1}). 
\end{eqnarray}
Now since the quasi-local energy $E(\Sigma,X,T_{0},S=X^{\alpha}\partial_{\alpha})$ corresponding to the initial data is finite by the asymptotic flatness of order one condition, we have $\int_{\mathbb{S}^{2}}\widehat{X}^{i} (h^{-2}_{0}-h^{-2})\mu_{\mathbb{S}^{2}}=0$ by lemma \ref{finite}. Therefore we have 
\begin{eqnarray}
\partial_{t}\int_{\Sigma}X^{i}\rho \mu_{\Sigma}=\frac{1}{a^{0}}\int_{\mathbb{S}^{2}}\widehat{X}^{i}\partial_{t}\left(h^{-3}_{0}(t)-h^{-3}(t)\right)\mu_{\mathbb{S}^{2}}+O(r^{-1})    
\end{eqnarray}
which concludes the proof of the proposition
\end{proof}
We continue to compute the time derivative of the each of the terms in the expression for the quasi-local entity $E(\Sigma,X,T_{0},S=X^{\alpha}\partial_{\alpha})$ 
\begin{proposition}
Under the asymptotic decomposition, the following estimate holds 
\begin{eqnarray}
\partial_{t}\int_{\Sigma}\rho X^{0}T^{0}_{0}\mu_{\Sigma}=O(r^{-1})
\end{eqnarray}
\end{proposition}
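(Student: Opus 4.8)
The plan is to commute $\partial_t$ through the integral over the fixed coordinate sphere $\Sigma_r$, insert the asymptotic expansions (\ref{eq:exp1})--(\ref{eq:exp2}) together with the embedding expansions that precede them, and show that the unique $O(1)$ piece of the integrand is in fact time-independent, so that differentiation in $t$ leaves only contributions of order $O(r^{-1})$. Concretely, I would write $\partial_t\int_{\Sigma_r}\rho X^0 T^0_0\mu_\Sigma=\int_{\Sigma_r}\partial_t\bigl(\rho X^0 T^0_0\mu_\Sigma\bigr)$ and substitute $\rho=r^{-2}\rho^{-2}+r^{-3}\rho^{-3}(t)+o(r^{-3})$, $X^0=(X^0)^0+r^{-1}(X^0)^{-1}(t)+o(r^{-1})$, $T^0_0=a^0+O(r^{-1})$, and $\mu_\Sigma=\bigl(r^2+\tfrac{r}{2}\widehat{\sigma}^{ab}\sigma^1_{ab}+O(1)\bigr)\mu_{\mathbb{S}^2}$. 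Multiplying these out, the only product of order $O(1)$ is $\rho^{-2}(X^0)^0 a^0\,\mu_{\mathbb{S}^2}$, consistent with the scaling $\rho X^0 T^0_0\mu_\Sigma=O(1)$ already recorded in the proof of Lemma \ref{finite}; every other product is $O(r^{-1})$ or smaller.

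The crucial step is to determine which expansion coefficients carry time dependence. Because the evolution equations force $\partial_t g_{ij}=O(r^{-2})$ and $\partial_t k_{ij}=O(r^{-3})$, the $1/r$ coefficient of the metric is $t$-independent, and hence so are $\sigma^1_{ab}$, $h^{-2}$ and $\alpha^{-1}_H$; propagating this through the isometric and optimal isometric embedding equations, the data $(X^0)^0$, $(X^i)^0$, $a^0$, $a^i$ and the Minkowski quantities $h^{-2}_0$, $\alpha^{-1}_{H_0}$ are likewise independent of $t$, the time dependence first appearing at the next order in $\sigma^0_{ab}(t)$, $h^{-3}(t)$, $h^{-3}_0(t)$, $(X^0)^{-1}(t)$, and so on. In particular $\rho^{-2}=\tfrac{h^{-2}_0-h^{-2}}{a_0}$ is $t$-independent, so the entire $O(1)$ part $\rho^{-2}(X^0)^0 a^0\,\mu_{\mathbb{S}^2}$ of the integrand is annihilated by $\partial_t$, and the leading area element $r^2\mu_{\mathbb{S}^2}$ is itself time-independent.

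It then follows that $\partial_t\bigl(\rho X^0 T^0_0\mu_\Sigma\bigr)=O(r^{-1})$ pointwise on $\mathbb{S}^2$, and integrating against the round measure yields $\partial_t\int_\Sigma\rho X^0 T^0_0\mu_\Sigma=O(r^{-1})$, as claimed. The main obstacle is precisely the bookkeeping of the second paragraph: one must verify order by order, using the decay of $\partial_t g$ and $\partial_t k$ and the same analysis that produced the expansions (\ref{eq:exp1})--(\ref{eq:exp2}), that no time dependence leaks into the leading-order coefficients $\rho^{-2}$, $(X^0)^0$ and $a^0$. Once this is established the estimate is immediate, since only the manifestly $O(r^{-1})$ remainder terms survive the time differentiation.
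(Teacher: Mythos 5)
Your proposal is correct and follows essentially the same route as the paper: the paper's proof computes $\partial_t\rho=O(r^{-3})$, $\partial_t X^0=O(r^{-1})$, $\partial_t T^0_0=O(r^{-1})$ and $\partial_t\mu_\Sigma=O(1)$ from the expansions (\ref{eq:exp1})--(\ref{eq:exp2}) and applies the product rule, which is exactly your observation that the leading coefficients $\rho^{-2}$, $(X^0)^0$, $a^0$ and $r^2\mu_{\mathbb{S}^2}$ carry no time dependence, so only the $O(r^{-1})$ remainders survive differentiation. Your bookkeeping of which coefficients are $t$-independent matches the paper's convention of marking only the subleading coefficients with $(t)$.
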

\begin{proof}
This follows by explicit computation. First using the asymptotic expansions \ref{eq:exp1}-\ref{eq:exp2} we have the following 
\begin{eqnarray}
\partial_{t}\rho=\frac{1}{a^{0}}\frac{\partial_{t}\left(h^{-3}_{0}(t)-h^{-3}(t)\right)}{r^{3}}+\frac{h^{-2}_{0}-h^{-2}}{(a^{0})^{3}r^{3}}\sum_{j}a^{j}\partial_{t}T^{-1}_{0}(t)_{j}+O(r^{-4}),\\
\partial_{t}X^{0}=\frac{\partial_{t}(X^{0})^{-1}(t)}{r}+o(r^{-1}),\\
\partial_{t}T^{0}_{0}=\frac{\partial_{t}(T^{0})^{-1}(t)}{r}+O(r^{-2}),\\
\partial_{t}\mu_{\Sigma}=\frac{1}{2}\mu_{\Sigma}\sigma^{ab}\partial_{t}\sigma_{ab}=(r^{2}+O(r))(\frac{1}{r^{2}}\widehat{\sigma}^{ab}+O(r^{-3}))\partial_{t}\sigma^{0}_{ab}(t)=O(1)
\end{eqnarray}
Collecting all the terms yields the result. 
\end{proof}

\begin{proposition}
Under the asymptotic expansion \ref{eq:exp1}-\ref{eq:exp2}, we have 
\begin{eqnarray}
\partial_{t}\int_{\Sigma}\rho X^{i}T^{i}_{0}\mu_{\Sigma}=\frac{1}{a_{0}}\int_{\mathbb{S}^{2}}\partial_{t}(h^{-3}_{0}(t)-h^{-3}(t))a_{i}\widehat{X}^{i}\mu_{\mathbb{S}^{2}}+O(r^{-1})
\end{eqnarray}
\end{proposition}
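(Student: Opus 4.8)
The plan is to differentiate under the integral sign on the fixed sphere $\Sigma_r$ and then track asymptotic orders exactly as in the preceding proposition; the only genuinely new feature is the extra factor $T^i_0$, which produces one additional term requiring care. By the Leibniz rule (with summation over $i=1,2,3$ understood),
\[
\partial_t\int_{\Sigma}\rho X^i T^i_0\,\mu_\Sigma = \int_\Sigma\Big[(\partial_t\rho)\,X^iT^i_0 + \rho\,(\partial_tX^i)\,T^i_0 + \rho\, X^i\,(\partial_tT^i_0)\Big]\mu_\Sigma + \int_\Sigma\rho\, X^iT^i_0\,\partial_t\mu_\Sigma.
\]

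The key structural input, read off from the expansions \ref{eq:exp1}--\ref{eq:exp2}, is that the time dependence enters only at subleading orders: the round metric $\widehat\sigma$, the coefficient $\sigma^1$, the $SO(3)$ functions $\widehat X^i$, the reconstructed coefficient $(X^i)^0$, the $t$-independent coefficients $h^{-2}$, $h^{-2}_0$, and the leading observer components $(a^0,a^i)$ are all annihilated by $\partial_t$. Hence $\partial_t X^i = \frac{\partial_t(X^i)^{-1}(t)}{r}+o(r^{-1})=O(r^{-1})$, $\partial_t T^i_0 = \frac{\partial_t(T^i)^{-1}(t)}{r}+O(r^{-2})=O(r^{-1})$, and $\partial_t\mu_\Sigma = \tfrac12\mu_\Sigma\sigma^{ab}\partial_t\sigma_{ab} = O(1)\,\mu_{\mathbb{S}^2}$, the last because the factor $r^{-2}$ from $\sigma^{ab}$ cancels the $r^2$ in $\mu_\Sigma$. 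For $\partial_t\rho$ I would reuse the $O(r^{-3})$ expression already derived, whose leading coefficient splits into an $h^{-3}$-piece and an $h^{-2}$-piece proportional to $\sum_j a^j\partial_t T^{-1}_0(t)_j$.

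Next I would order-count the four terms using $\rho=O(r^{-2})$, $X^i=O(r)$, $T^i_0=O(1)$, $\mu_\Sigma=O(r^2)\,\mu_{\mathbb{S}^2}$. The $\partial_t X^i$ term is $O(r^{-1})$ and the $\partial_t\mu_\Sigma$ term is $O(r^{-1})$, so both are negligible. The $\partial_t\rho$ term is $O(1)$; inserting the leading factors $X^i T^i_0\mu_\Sigma = r^3\widehat X^i a^i\mu_{\mathbb{S}^2}+O(r^2)$ and summing $\sum_i a^i\widehat X^i=a_i\widehat X^i$, its $h^{-3}$-piece produces exactly $\frac{1}{a^0}\int_{\mathbb{S}^2}\partial_t(h^{-3}_0-h^{-3})\,a_i\widehat X^i\,\mu_{\mathbb{S}^2}$, while its $h^{-2}$-piece is proportional to $\int_{\mathbb{S}^2}(h^{-2}_0-h^{-2})\,a_i\widehat X^i\,\mu_{\mathbb{S}^2}$, which vanishes by the finiteness identity of Lemma \ref{finite}.

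The main obstacle is the $\partial_t T^i_0$ term, which is a priori $O(1)$ and has no counterpart in the preceding proposition. Its leading contribution is $\int_{\mathbb{S}^2}\rho^{-2}\sum_i\widehat X^i\,\partial_t(T^i)^{-1}(t)\,\mu_{\mathbb{S}^2}$. The decisive point I would emphasize is that the subleading observer components $(T^i)^{-1}(t)$ are constants: the optimal $T_0(r)$ is, for each $r$, a constant future timelike vector in $\mathbb{R}^{3,1}$, so its $r$-expansion coefficients do not depend on the sphere point. They therefore factor out, leaving $\sum_i\partial_t(T^i)^{-1}(t)\int_{\mathbb{S}^2}\rho^{-2}\widehat X^i\,\mu_{\mathbb{S}^2}$. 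Since $\rho^{-2}=\frac{h^{-2}_0-h^{-2}}{a_0}$, this integral equals $\frac{1}{a_0}\int_{\mathbb{S}^2}(h^{-2}_0-h^{-2})\widehat X^i\,\mu_{\mathbb{S}^2}=0$, once more by Lemma \ref{finite}. Hence this term is also $O(r^{-1})$, and collecting the single surviving contribution yields the claimed identity.
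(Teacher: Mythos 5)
Your proof is correct and follows essentially the same route as the paper: expand $\partial_t\rho$, extract the $O(1)$ contribution from its $h^{-3}$-piece, and kill the $h^{-2}$-piece using the vanishing of $\int_{\mathbb{S}^2}\widehat X^i(h^{-2}_0-h^{-2})\mu_{\mathbb{S}^2}$ guaranteed by the finiteness of the initial energy. Your explicit treatment of the a priori $O(1)$ term $\int_\Sigma\rho X^i\partial_tT^i_0\,\mu_\Sigma$ --- observing that $\partial_t(T^i)^{-1}(t)$ is constant on the sphere so the term reduces to the same vanishing integral --- is a welcome piece of bookkeeping that the paper's proof leaves implicit, but it does not change the argument.
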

\begin{proof}
We use the expansion 
\begin{eqnarray}
\partial_{t}\rho=\frac{1}{a^{0}}\frac{\partial_{t}\left(h^{-3}_{0}(t)-h^{-3}(t)\right)}{r^{3}}+\frac{h^{-2}_{0}-h^{-2}}{(a^{0})^{3}r^{3}}\sum_{j}a^{j}\partial_{t}T^{-1}_{0}(t)_{j}+O(r^{-4})
\end{eqnarray}
to yield the $O(1)$ terms $\frac{1}{a_{0}}\int_{\mathbb{S}^{2}}\partial_{t}(h^{-3}_{0}(t)-h^{-3}(t))a_{i}X^{i}\mu_{\mathbb{S}^{2}}$ and a term that is proportional to $\int_{\mathbb{S}^{2}}(h^{-2}_{0}-h^{-2})\widehat{X}^{i}a_{i}\mu_{\mathbb{S}^{2}}$. The second term vanishes following the finiteness property of the total energy of the initial data from lemma \ref{finite}. This completes the proof. 
\end{proof}
Now recall the definitions of the Wang-Yau quasi-local energy and the linear momentum 
\begin{eqnarray}
e=\frac{1}{8\pi}\int_{\Sigma}(\rho T^{0}_{0}+J^{a}\nabla_{a}X^{0})\mu_{\Sigma}\\
p^{i}=\frac{1}{8\pi}\int_{\Sigma}(\rho T^{i}_{0}+J^{a}\nabla_{a}X^{i})\mu_{\Sigma}.    
\end{eqnarray}
and one obtains through the asymptotic expansion 
\begin{eqnarray}
\label{eq:ADM1}
e=\frac{1}{8\pi}\int_{\mathbb{S}^{2}}(h^{-2}_{0}-h^{-2})\mu_{\mathbb{S}^{2}}+O(r^{-1})\\
\label{eq:ADM2}
p^{i}=\frac{1}{8\pi}\int_{\mathbb{S}^{2}}\widehat{X}^{i}\div(\alpha^{-1}_{H})\mu_{\mathbb{S}^{2}}+O(r^{-1})
\end{eqnarray}
and therefore at asymptotic infinity one has $e$ as the ADM energy and $p^{i}$ as the ADM momentum components. Now we invoke the following lemma from \cite{CWY} that are vital to proving the conservation theorem. 
\begin{lemma}[\cite{CWY}]
\label{final}
Under the Einstein evolution equation with lapse function $N=1+O(r^{-1})$ and shift vector field $Y=O(r^{-1})$, we have 
\begin{eqnarray}
\int_{\mathbb{S}^{2}}\widehat{X}^{i}\partial_{t}h^{-3}_{0}\mu_{\mathbb{S}^{2}}=0,\\
\int_{\mathbb{S}^{2}}\widehat{X}^{i}\partial_{t}h^{-3}\mu_{\mathbb{S}^{2}}=-8\pi p^{i}.
\end{eqnarray}
\end{lemma}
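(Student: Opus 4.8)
The plan is to reduce both identities to statements about the purely \emph{spatial} mean curvature $\widehat{h}$ of the fixed coordinate spheres $\Sigma_r$ in the slice $(M,g(t))$, and then to isolate the surviving $r^{-3}$ contribution after pairing against the translation modes $\widehat{X}^i\in\ker(\widehat{\Delta}+2)$. The key structural input is the Lorentzian splitting of the mean curvature vector: writing $\nu$ for the outward spatial normal and $u$ for the future unit normal of the slice, one has $\mathbf{H}=\widehat{h}\,\nu-(\sigma^{ab}k_{ab})\,u$, hence $|\mathbf{H}|^2=\widehat{h}^2-(\sigma^{ab}k_{ab})^2$. Since $k=O(r^{-2})$ gives $\sigma^{ab}k_{ab}=O(r^{-2})$, the correction enters $|\mathbf{H}|$ only at order $r^{-3}$, producing $h^{-3}=\widehat{h}^{-3}-\frac14(\kappa^{-2})^2$, where $\kappa^{-2}$ is the $r^{-2}$ coefficient of $\sigma^{ab}k_{ab}$. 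Now the evolution equations force $\partial_t k=O(r^{-3})$, so $\kappa^{-2}$ is \emph{time-independent}: the quadratic correction is frozen and therefore $\partial_t h^{-3}=\partial_t\widehat{h}^{-3}$. This is the decisive simplification for the second identity, and it also shows the answer depends only linearly on $k$, as the appearance of the ADM momentum demands.

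For the second identity I would compute $\partial_t\widehat{h}$ from the first variation of the mean curvature of the \emph{fixed} sphere $\Sigma_r$ under the metric flow $\partial_t g_{ij}=-2Nk_{ij}+(L_Y g)_{ij}$. With $N=1+O(r^{-1})$ the principal contribution is the variation under $\delta g=-2k$, for which the standard formula expresses $\partial_t\widehat{h}$ through $k_{ij}$ and its normal derivative; extracting the $r^{-3}$ coefficient, integrating against $\widehat{X}^i$ over $\mathbb{S}^2$ and integrating by parts with $\widehat{\Delta}\widehat{X}^i=-2\widehat{X}^i$ produces precisely the ADM momentum surface integral $\frac{1}{8\pi}\int(k_{ij}-(\text{tr}_g k)g_{ij})\widehat{X}^i\nu^j$, which by \eqref{eq:ADM2} equals $p^i$. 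Tracking signs and constants then yields $\int_{\mathbb{S}^2}\widehat{X}^i\partial_t h^{-3}\,\mu_{\mathbb{S}^2}=-8\pi p^i$.

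For the first identity the relevant fact is that $h_0^{-3}$ belongs to the isometric embedding of $(\mathbb{S}^2,\sigma(t))$ into the \emph{flat} reference space, so it is a functional of the induced-metric coefficients $\widehat{\sigma},\sigma^1,\sigma^0$ and of the time function $\tau$. Since $\partial_t g=O(r^{-2})$ forces $\partial_t\sigma_{ab}=O(1)$, the leading coefficients are frozen, $\partial_t\widehat{\sigma}=\partial_t\sigma^1=0$, and only $\sigma^0$ (with the corresponding subleading $\tau$-data) evolves; thus $\partial_t h_0^{-3}$ is a linear functional of $\partial_t\sigma^0$ alone. I would then run the same second-variation-of-area mechanism that proves Lemma~\ref{impt}: for a one-parameter family of flat-space isometric embeddings, the pairing of the order-$r^{-3}$ mean-curvature coefficient against the translation modes $\widehat{X}^i\in\ker(\widehat{\Delta}+2)$ is governed by a Minkowski-type integral identity for closed surfaces in flat space. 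Because the leading geometry is frozen and only $\sigma^0$ moves, this identity collapses to the clean vanishing $\int_{\mathbb{S}^2}\widehat{X}^i\partial_t h_0^{-3}\,\mu_{\mathbb{S}^2}=0$.

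The hard part will be the order-$r^{-3}$ bookkeeping in the second identity. In the gauge $N=1+O(r^{-1})$, $Y=O(r^{-1})$, both the subleading lapse piece and the shift term $(L_Y g)_{ij}$ reach order $r^{-3}$ in $\partial_t\widehat{h}$ (for instance $Y^j\partial_j\widehat{h}=O(r^{-1})O(r^{-2})=O(r^{-3})$), so they cannot be discarded on size grounds; the task is to show that, after integration against the $\ell=1$ modes $\widehat{X}^i$, these gauge contributions assemble into total $\mathbb{S}^2$-divergences or cancel, leaving exactly the gauge-invariant linear-in-$k$ flux that equals $-8\pi p^i$. A secondary difficulty is justifying the flat-space balance law for the reference at order $r^{-3}$ rather than at the $r^{-2}$ order of Lemma~\ref{impt}; here the freezing of $\widehat{\sigma}$ and $\sigma^1$ is precisely what restricts the variation to $\sigma^0$ and makes the second-variation argument yield vanishing.
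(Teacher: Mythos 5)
There is nothing in the paper to compare your argument against: Lemma \ref{final} is quoted verbatim from \cite{CWY} (it is part of the machinery behind their center-of-mass evolution law $\partial_{t}C^{i}=p^{i}/e$), and the paper supplies no proof of it. Judged on its own terms, your sketch gets the structural reductions right, and they are genuinely useful: the splitting $|\mathbf{H}|^{2}=\widehat{h}^{2}-(\sigma^{ab}k_{ab})^{2}$, the observation that $\partial_{t}g_{ij}=O(r^{-2})$ and $\partial_{t}k_{ij}=O(r^{-3})$ freeze $\widehat{\sigma}$, $\sigma^{1}$ and the $r^{-2}$ coefficient of $\sigma^{ab}k_{ab}$, and the resulting identity $\partial_{t}h^{-3}=\partial_{t}\widehat{h}^{-3}$ are all correct, and they do reduce the second identity to a statement about the spatial mean curvature of the fixed coordinate spheres.

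Beyond that point, however, both identities are asserted rather than proved, and in each case a necessary ingredient is missing, not merely postponed. For the momentum identity: the first variation of $\widehat{h}$ under $\delta g=-2Nk+L_{Y}g$ contains the normal-derivative term $\nu(\sigma^{ab}k_{ab})$ (alongside $k^{ab}A_{ab}$, $k(\nu,\nu)\widehat{h}$ and $\mathrm{div}_{\sigma}(k(\nu,\cdot)^{\top})$), whereas the ADM flux $\int(k_{jl}-(\mathrm{tr}\,k)g_{jl})\nu^{l}$ contains no derivatives of $k$; converting one into the other cannot be done surface by surface and requires the vacuum momentum constraint $\nabla^{j}(k_{jl}-(\mathrm{tr}\,k)g_{jl})=0$, which your sketch never invokes. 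The same applies to the claimed cancellation of the shift and subleading-lapse contributions after projection onto $\widehat{X}^{i}$: that cancellation \emph{is} the content of the lemma, and no mechanism for it is given. (Note also that \ref{eq:ADM2} defines $p^{i}$ through $\mathrm{div}(\alpha^{-1}_{H})$, so identifying your flux integral with $p^{i}$ is itself an additional step that again needs the constraint.) For the reference identity the gap is more basic: $h^{-3}_{0}$ is a coefficient of the Minkowskian norm $|H_{0}|$, with $H_{0}=-\Delta X$ depending on all four embedding functions, and since $\tau=-ra_{i}\widehat{X}^{i}+O(1)$ is of order $r$, its contributions enter at every order of the expansion; the equality $h^{-2}_{0}=\widehat{h}^{-2}_{0}$ used in the paper is a special fact (Lemma 4 of \cite{chen2011evaluating}) that you cannot assume persists at order $r^{-3}$. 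Moreover $\partial_{t}h^{-3}_{0}$ involves $\partial_{t}(X^{0})^{-1}(t)$, $\partial_{t}(X^{i})^{-1}(t)$ and $\partial_{t}T^{-1}_{0}(t)$, which are tied to $\partial_{t}\sigma^{0}$ only through the linearized isometric and optimal isometric embedding equations (equation \ref{eq:optimal1} at order $r^{-3}$); your proposal never uses those equations, and ``running the second-variation mechanism of Lemma \ref{impt} one order higher'' has no justification as stated --- Lemma \ref{impt} is a statement about the $r^{-2}$ coefficients of the frozen data, and nothing in its proof controls the $\ell=1$ modes of a time derivative at order $r^{-3}$.
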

Now we can prove the conservation theorem.
\section{Proof of the theorem \ref{theorem2}}
First, collecting all the terms, we obtain 
\begin{eqnarray}
 \partial_{t}E(\Sigma,X,T_{0},K=S=X^{\alpha}\partial_{\alpha})= \frac{1}{8\pi a^{0}}\int_{\mathbb{S}^{2}}\partial_{t}(h^{-3}_{0}(t)-h^{-3}(t))a_{i}\widehat{X}^{i}\mu_{\mathbb{S}^{2}}+O(r^{-1})  
\end{eqnarray}
which by lemma \ref{final} reduces to the following 
\begin{eqnarray}
\partial_{t}E(\Sigma,X,T_{0},K=S=X^{\alpha}\partial_{\alpha})&=& -\frac{1}{8\pi a^{0}}\int_{\mathbb{S}^{2}}h^{-3}(t)a_{i}\widehat{X}^{i}\mu_{\mathbb{S}^{2}}+O(r^{-1}) \\\nonumber 
&=&\frac{p^{i}a_{i}}{a^{0}}+O(r^{-1}).
\end{eqnarray}
Now we want to abe able relate $p^{i}$, $a^{i}$, and $a^{0}$. To this end recall the lemma \ref{integrability}. This is essentially an integrability condition for the optimalisometric embedding equation at the lowest non-trivial order
\begin{eqnarray}
\label{eq:2}
a^{i}\int_{\mathbb{S}^{2}}(h^{-2}_{0}-h^{-2})\mu_{\mathbb{S}^{2}}=a^{0}\int_{\mathbb{S}^{2}}\widehat{X}^{i}\widehat{\text{div}}(\alpha^{(-1)}_{H})\mu_{\mathbb{S}^{2}}.
\end{eqnarray}
Now recall the asymptotic expansion of $e$ and $p^{i}$ in \ref{eq:ADM1}-\ref{eq:ADM2} and observe at spacelike infinity 
\begin{eqnarray}
 \int_{\mathbb{S}^{2}}(h^{-2}_{0}-h^{-2})\mu_{\mathbb{S}^{2}}=8\pi e,~\int_{\mathbb{S}^{2}}\widehat{X}^{i}\widehat{\text{div}}(\alpha^{(-1)}_{H})\mu_{\mathbb{S}^{2}}=8\pi p^{i}   
\end{eqnarray}
and therefore we have at spacelike infinity due to \ref{eq:2} 
\begin{eqnarray}
a^{i}=\frac{a^{0}p^{i}}{e}    
\end{eqnarray}
yielding at asymptotic infinity
\begin{eqnarray}
\partial_{t}E(\Sigma,X,T_{0},K=X^{\alpha}\partial_{\alpha})=\frac{p^{i}p_{i}}{e}=\frac{p^{2}}{e}.
\end{eqnarray}
This concludes the proof of the conservation theorem. 
\section{Conclusion}
We defined a new quasi-local entity by using the definition introduced by \cite{CWY}. This is accomplished by transplanting the conformal Killing vector field the scaling vector field $S=X^{\alpha}\partial_{\alpha}$ of the Minkowksi space through the isometric embedding of the topological $2-$sphere under study. This quasi-local entity verifies several desirable properties. In particular it is finite for initial data verifying asymptotic flatness of order one condition. Moreover, the total energy verifies dynamical law $\partial_{t}E(t)=\frac{p^{2}}{e}$ in terms of the ADM momentum $p$ and ADM energy $e$. This quasi-local entity has potential application in the stability problem of asymptotically flat spacetimes of order one, in particular, black-hole spacetimes. An open problem to study is its positivity property. Note that $S=X^{\alpha}\partial_{\alpha}$ is time-like within the future light cone in Minkowski space. Therefore through a causality argument, one would conjecture that the quasi-local energy defined in the current context is strictly positive and vanishes for a sphere embedded in the Minkowski space. The vanishing property is straightforward to note from the definition while the positivity remains a non-trivial task. This needs to be investigated in the future. Another conformal Killing vector field that is of extreme importance in the stability problem to construct weighted energy is the inversion generator $K=\eta(X,X)\partial_{t}+2X^{0}X^{\alpha}\partial_{\alpha}$. However, in the current context, due to being quadratic in the coordinates, this vector field does not yield a meaningful entity if one simply plugs it in the definition \ref{def1}. One may require an appropriate modification to construct quasi-local conserved charges associated with the remaining conformal Killing vector fields or even perhaps the Killing tensors.

\section{Acknowledgement} This work was supported by the Center of Mathematical Sciences and Applications (CMSA), Department of Mathematics at Harvard University.

\end{document}